\documentclass[11pt,a4paper]{article}

\usepackage[utf8]{inputenc}
\usepackage[T1]{fontenc}
\usepackage{amsmath,amssymb,amsthm}
\usepackage{microtype}
\usepackage{booktabs}
\usepackage{hyperref}
\usepackage{xcolor}
\usepackage{enumitem}
\usepackage{listings}
\usepackage{geometry}
\usepackage{pgfplots}
\pgfplotsset{compat=1.18}

\newcommand{\nmpshape}[3]{\ensuremath{\langle #1, #2, #3 \rangle}}
\newcommand{\nmpfield}[4]{\ensuremath{#1\langle #2, #3, #4 \rangle}}     
\newcommand{\nmpfieldc}[4]{\ensuremath{#1\langle #2, #3, #4 \rangle^{\mathrm{c}}}}
\newcommand{\nmpcount}[5]{\ensuremath{#1\langle #2, #3, #4 \rangle\!:\!#5}} 

\newcommand{\Ff}{\ensuremath{\mathbb{F}}}
\newcommand{\Ftwo}{\ensuremath{\Ff_2}}
\newcommand{\Fthree}{\ensuremath{\Ff_3}}
\newcommand{\Reals}{\ensuremath{\mathbb{R}}}
\newcommand{\Complex}{\ensuremath{\mathbb{C}}}
\newcommand{\Rationals}{\ensuremath{\mathbb{Q}}}
\newcommand{\Integers}{\ensuremath{\mathbb{Z}}}

\newtheorem{theorem}{Theorem}
\newtheorem{lemma}[theorem]{Lemma}

\theoremstyle{definition}

\theoremstyle{remark}

\newcommand{\code}[1]{\texttt{#1}}

\title{A catalog of fast matrix multiplication algorithms\\
with exhaustive derivations\thanks{Source repository:
\href{https://github.com/solven-eu/matmulcatalog}{github.com/solven-eu/matmulcatalog}.
This article is generated from \texttt{paper/article.tex} alongside the
catalog; numerical tables track the catalog by construction.}}

\author{Benoit Lacelle\\
\texttt{benoit@solven.eu}}

\date{Draft, \today}

\begin{document}
\maketitle

\begin{abstract}

The 2022--2026 burst of activity in small-format matrix multiplication
(AlphaTensor 2022, AlphaEvolve 2025, Schwartz--Zwecher 2025, and
Perminov's open-source flip-graph
framework~\cite{perminov2026fast,perminov2025fast}) has produced
striking results but scattered them across fields, attribution
conventions, and serialisation formats.
We present a unified, machine-checkable catalog covering shapes up to
\nmpshape{32}{32}{32} over \Rationals, \Integers, \Reals, \Complex,
\Ftwo, and \Fthree, with a separate axis for commutative algorithms
(Waksman 1970, Makarov 1986, Rosowski 2019). Derivation over the
catalog applies a fixed set of well-defined operators exhaustively ---
axis-flip, Kronecker (with its serendipitous ``bud'' variant), axis
concatenation, recombination-with-allocation (with optional output
peeling and leaf-level pair fusion), and downward projection --- iterated to a
fixed point (Section~\ref{sec:strategies}).
Our derivation layer is closest in spirit to Sedoglavic's FMM-Lille
catalog~\cite{fmmlille}, which likewise layers recursive derivations
(Kronecker with serendipity, concatenation, projection) over known bases; we
differ by covering more fields (\Integers, \Ftwo, \Fthree, and a
commutative axis alongside \Rationals/\Reals/\Complex) and by exploring
the derivation space more exhaustively --- rather than sampling it
stochastically, as the flip-graph / meta-flip-graph
methods~\cite{perminov} do.

As of the 2026-07-12 snapshot over \Rationals, our derivations improve
on the best external catalog (FMM-Lille or Perminov) at 908 shapes ---
\(19\%\) of the 4760 comparable shapes in the large-format band
\(17 \le \max(n,m,p) \le 32\); in fact every strict win we produce
lands in that band, small formats being saturated. We tie the external
best at a further 3688 shapes and trail on only six audited ones, each
explained by a single sharing device (fusion kernels) that our
composition deliberately forgoes.

We refresh the DIS09 comparison tables, split per field with a
commutative column, and provide tooling to regenerate them as the
catalog evolves.

\end{abstract}

\section{Introduction}\label{sec:intro}

The complexity of multiplying two \(n \times n\) matrices has been
studied since Strassen's 1969 algorithm~\cite{strassen1969} broke the cubic barrier, and
remains open: the exact value of the matrix multiplication exponent
\(\omega\) is unknown, though sub-cubic upper bounds have been driven
down to \(\omega < 2.371\ldots\) via the laser method and its
successors. The asymptotic regime has been productive for theoretical
upper bounds but has produced no concrete algorithms one could implement
for finite \(n\); the constants are astronomical.

Meanwhile, the \emph{small-format} regime -- exact rank bounds for
fixed \(\nmpshape{n}{m}{p}\) tensors with \(n, m, p \le 32\) or so --
has lived a quieter life dominated by hand-constructed schemes:
Strassen (1969)~\cite{strassen1969}, Hopcroft--Kerr
(1971)~\cite{hopcroftkerr1971}, Laderman (1976)~\cite{laderman1976},
Pan (1978, 1980)~\cite{pan1978,pan1980}, and Smirnov (1986,
2013)~\cite{smirnov1986,smirnov2013}. These results are concrete
algorithms; they multiply small matrices using a count of bilinear
products that beats the naïve \(nmp\). When applied recursively, each
small-format rank improvement compounds into an \(\omega\)
improvement. The strongest \emph{practical} small-format exponent comes
from Smirnov's \(\nmpshape{3}{3}{6}\) scheme of rank
40~\cite{smirnov2013}: at \(\omega = 3\log 40 / \log 54 \approx
2.7743\) it sits below the exponent of every explicit cubic scheme in
the catalog, making it the best base case a recursive implementation
can realistically use.

The 2022--2026 period has shaken this landscape, with a fresh record
almost every year. In 2022, AlphaTensor \cite{alphatensor2022} found 47
bilinear products for \(\nmpfield{\Ftwo}{4}{4}{4}\). In 2024,
Kaporin~\cite{kaporin2024brent} gave an explicit complex 48 for
\(\nmpfield{\Complex}{4}{4}{4}\) via a semi-analytical solution of the
Brent equations; in 2025, AlphaEvolve \cite{alphaevolve2025}
independently found 48 for the same shape, which
Dumas--Pernet--Sedoglavic \cite{dps2025} rationalised the same year ---
48 with coefficients in \(\{\pm1,\pm\tfrac12,\pm\tfrac14,\pm\tfrac18\}\),
valid over \(\Rationals\) and hence \(\Reals\). Also in 2025,
Schwartz--Zwecher \cite{schwartzzwecher2025} produced a systematic
disjoint-sum construction giving record ranks at several large shapes. The FMM
catalog at Lille \cite{fmmlille} and Perminov's
\code{FastMatrixMultiplication} repository \cite{perminov} both
aggregate schemes from across the literature \emph{and} contribute
their own results, but via different approaches: FMM layers recursive
derivations (Kronecker with serendipity, concatenation, projection) on top of
known bases, while Perminov runs flip-graph and meta-flip-graph search
to discover fresh low-rank schemes (including ternary-integer ones).

\paragraph{The problem.} These results live in incompatible formats
(Maple scripts, NumPy archives, ad-hoc JSON), use inconsistent shape
conventions (sorted vs unsorted \nmpshape{n}{m}{p}, with or without
explicit field tags), and carry inconsistent attribution. A
\(\nmpshape{3}{3}{3}\) entry sourced from AlphaTensor is sometimes
cited as ``the AlphaTensor scheme'' even though rank 23 was established
by Laderman~\cite{laderman1976} decades earlier -- only some of
AlphaTensor's per-format ranks were genuine improvements over prior
SOTA (its \(\nmpfield{\Ftwo}{4}{4}{4}=47\) among them). The same trap
exists in every bulk-import flow we have audited.

We argue that a catalog that is machine-checkable, fully
attributed, and uniformly serialised is now both possible and
necessary. This paper describes the construction of such a catalog,
together with an exhaustive derivation search that automatically discovers
recombinations of catalog entries.

\paragraph{Field discipline as load-bearing.} A bare \nmpshape{n}{m}{p}
is ambiguous: different fields admit different ranks, as the
\(\nmpshape{4}{4}{4}\) landscape illustrates --- 47 over \Ftwo, 48
over \(\Rationals/\Reals/\Complex\), 49 over \Integers. We keep fields
separate and require every rank claim in the catalog (and in
this paper) to carry a field tag; Section \ref{sec:notation} fixes
the conventions. Commutative algorithms (Waksman
1970~\cite{waksman1970}, Makarov 1986~\cite{makarov1986}, Rosowski
2019~\cite{rosowski2019}) form a separate axis: valid for scalar
matrix multiplication but not liftable to recursive multiplication
over non-commutative rings.

\paragraph{What the catalog catalogs.} Each entry encodes either
\begin{enumerate}[label=(\alph*)]
  \item explicit factor matrices \((U, V, W)\) with field tag,
    addition count, and lineage record; or
  \item a \emph{derived bound} -- a rank given by a closed-form
    construction (Waksman, Rosowski, Pan trilinear aggregation) that we
    reconstruct from its lineage record on demand; or
  \item a \emph{cited bound} -- our last resort, for the edge cases
    where we trust a published rank and attribute it to its source but
    do not (yet) hold its explicit factor matrices.
\end{enumerate}
This three-way split lets us include results we cannot yet
materialise as schemes without misrepresenting their status, and lets
us regenerate the derived layer as our formulas improve.

\paragraph{Derivation over the catalog.} Recursive multiplication
over the same catalog is performed by a search that recombines
existing entries via well-defined operators (Section
\ref{sec:strategies}): axis-flip, Kronecker product (including its
serendipitous ``bud'' variant), axis concatenation, recombination with
allocation (with optional output peeling and pair fusion), and
downward projection -- none of them a disjoint-sum / Sch\"onhage
$\tau$-theorem construction (Section \ref{ssec:disjoint-sum}). The
search is iterated until no further rank improvements appear at any
catalog shape -- a \emph{derivation closure} -- and the resulting wins
are materialised lazily after the search has converged.

\paragraph{The non-overlap property.} An observation that we believe is
underdiscussed: our composition operators never \emph{discover} sharing
between bilinear products while materialising a scheme. A materialised
scheme has exactly the rank the search predicted for it --- the sum
\(\sum_k r_{\text{sub-}k}\) of its sub-schemes' ranks for pure
block-substitution (Kronecker, concatenation, recombination), or a
smaller value where a \emph{known} identity (pair fusion, serendipity,
projection) reduces it, counted before materialisation and never found
during it. What never happens is a product turning out, once expanded,
to do double duty across outputs and merge with another for a free rank
saving. That kind of non-obvious sharing --- as in Strassen's
seven-product \(\nmpfield{\Reals}{2}{2}{2}\) scheme, where a product
serves several output cells at once --- is exactly what hand-crafted
schemes are built around (Laderman, Smirnov, AlphaTensor likewise). We
argue that this makes ``composed scheme matches hand-crafted scheme's
rank'' a meaningful co-existence claim, not a re-discovery claim. The same property
predicts where we \emph{lose}: on the handful of shapes an external
catalog still beats us (six against FMM-Lille), every deficit traces to
one cross-slot sharing device -- a \emph{fusion kernel} -- that our
operators forgo by construction, so those losses confirm the boundary
rather than contradict it. Section \ref{sec:nonoverlap} develops both
directions.

\paragraph{Outline.} Section \ref{sec:notation} fixes notation, the
field-discipline conventions, and the paper's glossary. Section
\ref{sec:architecture} describes the catalog's serialisation, lineage
records, and the explicit/cited/derived split. Section
\ref{sec:strategies} catalogs the derivation operators; Section
\ref{sec:nonoverlap} develops the non-overlap property and its
measured boundary (fusion kernels). Sections \ref{sec:multisets}
and~\ref{sec:hk71} present the two theory contributions ---
recombination multisets, and the constructive realisation of the
Hopcroft--Kerr bound. Section \ref{sec:search} describes the
exhaustive derivation search, and Section \ref{sec:balance} the empirical
case for unbalanced splits. Section \ref{sec:tables} gives the
date-stamped head-to-head against FMM-Lille and Perminov, with an
anatomy of the wins and losses. Section \ref{sec:openquestions}
closes with open questions, and Section \ref{sec:aiagent} records the
human/AI division of labour.

\section{Notation, field discipline, and glossary}\label{sec:notation}

A bilinear algorithm computing the product of an \(n \times m\)
matrix \(A\) by an \(m \times p\) matrix \(B\) consists of three
factor matrices \(U \in K^{nm \times r}\), \(V \in K^{mp \times
r}\), \(W \in K^{np \times r}\) over a field (or ring) \(K\), such
that for every \(A, B\):
\[
  (AB)_{i,j} \;=\; \sum_{k=1}^{r}
  W_{(i,j),k}\;\Big(\sum_{a,b} U_{(a,b),k}\, A_{a,b}\Big)
  \Big(\sum_{a',b'} V_{(a',b'),k}\, B_{a',b'}\Big).
\]
The integer \(r\) is the \emph{rank} (a.k.a.\ \emph{multiplication
count}) of the algorithm, and the quantity this paper focuses on.

\paragraph{Shape notation.} A bare \(\nmpshape{n}{m}{p}\) is not
sufficient -- the rank depends on the algebra. We adopt:
\begin{itemize}[itemsep=2pt]
  \item \(\nmpfield{K}{n}{m}{p}\): non-commutative matmul over the
    field \(K\). E.g.\ \(\nmpfield{\Rationals}{3}{3}{3} = 23\)
    (Laderman).
  \item \(\nmpfieldc{K}{n}{m}{p}\) (superscript-\(c\)): commutative
    matmul over \(K\). E.g.\ \(\nmpfieldc{\Rationals}{3}{3}{3} = 21\)
    (Rosowski) --- commutativity saves two multiplications at the same
    shape.
\end{itemize}
The value after \(=\) is always the \emph{rank} --- the number of
scalar multiplications, which is the quantity that compounds into an
\(\omega\) bound under recursion. Addition counts are tracked
separately (Section~\ref{sec:architecture}).

\paragraph{The fields we track.}
\begin{description}[itemsep=2pt]
  \item[\Integers, \Rationals, \Reals] Characteristic-zero and
    non-commutative-friendly (they lift to recursive matmul). We keep
    them \emph{distinct}, first-class fields --- each with its own rank
    claim and attribution, never collapsed into one combined
    \(\Integers/\Rationals/\Reals\) bucket --- and use the inclusion
    \(\Integers\subset\Rationals\subset\Reals\) only to compute which
    fields a scheme satisfies. At \(\nmpshape{4}{4}{4}\), for instance,
    \(\Integers\) stands at 49 (Strassen\(^2\)) while \(\Rationals\) and
    \(\Reals\) reach 48 (Dumas--Pernet--Sedoglavic~\cite{dps2025}) ---
    the same shape, genuinely different ranks. Ternary-integer
    schemes (\(\Integers\)T, coefficients in \(\{-1,0,1\}\)) are a
    first-class sub-class of \(\Integers\).
  \item[\Complex] A complex extension; real-valid schemes work
    immediately. AlphaEvolve's complex-coefficient
    \(\nmpfield{\Complex}{4}{4}{4} = 48\) lives here.
  \item[\Ftwo] Characteristic-2. AlphaTensor's
    \(\nmpfield{\Ftwo}{4}{4}{4} = 47\) lives here. Integer-valid
    schemes reduce mod 2 to give \Ftwo\ schemes; the field stampers
    grant this systematically (the promotion made \({\sim}8000\)
    integer schemes visible under the \Ftwo/\Fthree\ selectors).
  \item[\Fthree] Characteristic-3 (GF(3)) --- ``ternary modular'',
    not to be confused with the ternary-\emph{integer} ZT sub-class of
    \Integers. We track it so the discipline is genuinely
    field-agnostic, though coverage is currently thin: a dedicated
    \Fthree\ search is open (Section \ref{sec:openquestions}).
\end{description}

\paragraph{Commutative axis.} Independent of the field choice, an
algorithm may or may not assume that the entries of \(A\) and \(B\)
commute. Commutative-only schemes (Waksman, Rosowski, Makarov 1986)
do \emph{not} lift to recursive matmul over non-commutative rings:
they save bilinear products by exploiting the freedom \(ab = ba\)
that disappears as soon as \(A\) and \(B\) are themselves matrix
blocks, though they remain valid for scalar matmul. Rosowski 2019
(Theorem~5, \(n\) odd) reaches
\(\nmpfieldc{K}{n}{n}{n} \le n(n+1)(n+2)/2 \approx n^3/2\) --- about
half the \(n^3\) scalar products, but a constant-factor saving that
does not lower \(\omega\) over non-commutative rings. The catalog
flags every commutative scheme with a \code{"commutative": true} tag,
and the comparison tables of Section \ref{sec:tables} carry a
separate commutative column for each field, because
cross-contamination silently produces wrong ``wins'' against
historical non-commutative tables; the canonical example is comparing
\(\nmpfieldc{\Reals}{3}{3}{3} = 21\) (Rosowski, commutative) to
\(\nmpfield{\Reals}{3}{3}{3} = 23\) (Laderman, non-commutative)
as if they were the same metric.

We require every rank claim in the paper to carry both a field tag
and a commutativity tag. The reader can recover the metric from the
shape.

\subsection{Glossary}

The paper uses the following vocabulary consistently; variants from
the cited literature (``base'', ``factorisation'') are noted inline.

\begin{description}[itemsep=4pt, leftmargin=2em]

  \item[MM.] Matrix multiplication. Throughout this paper, an MM
    \emph{scheme} for shape \(\nmpshape{n}{m}{p}\) is a bilinear
    algorithm \((U, V, W)\) computing \(C = A \cdot B\) for
    \(A \in K^{n \times m}, B \in K^{m \times p}, C \in K^{n \times p}\)
    over a field \(K\).

  \item[Atom.] A primitive MM scheme not derived from other schemes
    in our framework --- imported from upstream (Perminov,
    AlphaTensor, AlphaEvolve), produced by a closed-form constructor
    (Waksman, Rosowski, Pan TA), or declared as such (Strassen,
    Laderman). Atoms are the leaves of the lineage DAG; some
    literature calls them \emph{base} schemes, but the same scheme
    can be a base for one search round and a derived result of
    another.

  \item[Lineage.] An expression describing in a constructive,
    replayable way how to build an MM scheme from atoms via the
    derivation operators (recombination, Kronecker, concatenation,
    axis-flip, permutation).

  \item[Stub.] A scheme persisted as \emph{lineage only} --- no explicit
    \((U, V, W)\) factor matrices on disk --- whose concrete matrices
    are reconstructed on demand by replaying the lineage
    (Section~\ref{sec:architecture}). Used to store large derived
    schemes compactly and, during search, as the strategy descriptor
    recorded before a candidate is materialised.

  \item[Allocation.] One way to partition a product of shape
    \(\nmpshape{n}{m}{p}\) into blocks compatible with an atom of
    shape \(\nmpshape{n'}{m'}{p'}\): a per-axis composition
    \((a_1, \ldots, a_{n'})\) with \(\sum a_i = n\) on axis A, and
    analogously on B and C. The atom is applied at the outer level,
    with sub-products carried by the blocks of the resulting
    Cartesian-product decomposition.

  \item[Shape \(\nmpshape{n}{m}{p}\).] A matrix-product shape with
    \(A \in K^{n \times m}, B \in K^{m \times p}\), giving
    \(C \in K^{n \times p}\). We always tag the algebra:
    \(K\nmpshape{n}{m}{p}\) for non-commutative,
    \(K\nmpshape{n}{m}{p}^c\) for commutative; the bare
    \(\nmpshape{n}{m}{p}\) form is reserved for contexts where the
    algebra has just been named.

  \item[Optimisation target.] This catalog does \emph{not yet} target
    minimal additions per scheme: two schemes of equal rank at the
    same shape (Strassen-18 vs Strassen--Winograd-15) induce the same
    recursion structure and the same \(\omega\), so for now we treat
    addition count as metadata rather than a primary quality measure ---
    though we still record it, and flag specific cases of interest such
    as the Strassen-18 vs Strassen--Winograd-15 pair above. What we \emph{do} minimise is
    the multiset of distinct sub-shapes \(\nmpshape{n'}{m'}{p'}\) the
    recursion induces --- fewer distinct shapes means a simpler block
    decomposition, less catalog dependency, and easier verification.
    An illustration from the 2026-06 snapshot: our then-best
    \(\nmpshape{17}{17}{17}{=}2940\) used three distinct sub-shapes
    where FMM-Lille's then-best \(2934\) used two --- the two-sub-shape
    recipe is structurally preferred regardless of addition counts.
    (Both catalogs have since reached \(2930\) at this shape; the
    criterion is unchanged.) One caveat on addition figures: the
    Strassen--Winograd ``15'' is a \emph{scheduled} count (after
    common-subexpression sharing), while flat-counting the \((U, V,
    W)\) forms gives \(\sim\)22--24 --- so raw addition counts are an
    unreliable cross-scheme comparison
    (Section~\ref{ssec:strassen-vs-winograd}).

  \item[Kronecker product.] Composition of a scheme at
    \(\nmpshape{n_1}{m_1}{p_1}\) with a scheme at
    \(\nmpshape{n_2}{m_2}{p_2}\) into a scheme at
    \(\nmpshape{n_1 n_2}{m_1 m_2}{p_1 p_2}\) of rank exactly
    \(r_1 \cdot r_2\) (Section~\ref{ssec:kronecker}).

  \item[Permutation.] Rewrite of a scheme under an \(S_3\)
    permutation of its three axes, yielding a rank-preserving
    six-element orbit of equivalent shapes
    (Section~\ref{sec:strategies}).

  \item[Axis-flip.] Rewrite that reverses the row order on any
    subset of the three axes, yielding a rank-preserving orbit of
    eight variants per scheme (Section~\ref{ssec:axis-flip}).

  \item[Isotropy + coordinate change.] The full symmetry group of a
    bilinear-algorithm tensor: the continuous coordinate changes
    \(GL_n(K) \times GL_m(K) \times GL_p(K)\) together with the discrete
    \(S_3\) axis symmetry; the isotropy group of a decomposition is the
    sub-group that fixes it~\cite{burichenko2014}. The \emph{Permutation}
    and \emph{Axis-flip} rewrites above are the finite \emph{monomial}
    corner of this group --- Permutation is the \(S_3\) relabelling of
    the three axes, axis-flip a row-reversal (a monomial element of the
    \(GL\) factors) --- the rank-preserving, cheaply enumerable
    symmetries we exploit directly, as opposed to the full continuous
    group. de~Groote's uniqueness theorem for
    \(\nmpshape{2}{2}{2}\)~\cite{degroote1978} is treated in
    Section~\ref{sec:multisets}.

  \item[Commutative vs non-commutative.] Defined in the
    commutative-axis paragraph of this section; every rank claim
    carries the tag.

\end{description}

\section{Catalog architecture}\label{sec:architecture}

The catalog's source of truth is a tree of \emph{per-scheme} JSON files
under \code{src/main/resources/schemes/section\(N\)/}, bucketed by the
maximum of the three shape dimensions. Each file is one algorithm ---
one shape, one field, one commutativity tag. We do not store every
scheme the search can build, only the ones worth keeping:
\begin{itemize}[itemsep=2pt]
  \item the \emph{elected} best derived scheme for each shape --- we aim
    for at least one per shape, the current best-in-scope construction;
  \item \emph{atoms} worth cataloguing in their own right --- a low
    rank, a low addition count, historical provenance, or any other
    reason to preserve the specimen;
  \item schemes kept for their \emph{ingredient} value: a serendipity or
    projection scheme may not itself hold the best rank at its shape yet
    be actively consumed as a sub-product by another shape's best
    scheme, so we retain it.
\end{itemize}
On top of this tree sit a few \emph{aggregated} JSON files under
\code{docs/}, \emph{generated} from it: the manifest
\code{docs/catalog.json} (one row per elected scheme, consumed by the
web browser and the comparison tooling), alongside
\code{docs/cited-bounds.json} and
\code{docs/derived-from-cited-bounds.json}.

\subsection{Three layers}

\paragraph{Explicit schemes.} JSON files that carry the full \((U,
V, W)\) factor matrices; every explicit scheme passes a randomised
spot-check at load time. The filename is a pure cosmetic label
(shape, rank, source or derivation note, short content hash): every
property --- field, addition count, commutativity --- is read from
the JSON content, never parsed from the name. The hash, the first
seven hex digits of a content hash of the factor matrices, gives
each scheme a stable identity independent of how it was named
upstream.

\paragraph{Derived bounds.} A separate JSON file
\code{docs/derived-from-cited-bounds.json} records rank claims that our own
constructors can re-derive on demand from formulas (Waksman 1970,
Rosowski 2019 Theorem 2/3, Pan 1980 trilinear aggregation, Hopcroft--Kerr
1971 \(\nmpshape{2}{b}{c}\)). Generated by
\code{GenerateDerivedBounds.java}; rebuilds whenever the underlying
formula changes. This is the layer that grows as we ``move'' bounds
from the cited tier to the derived tier by reproducing the original
construction.

\paragraph{Cited bounds.} The failover tier. A central JSON file
\code{docs/cited-bounds.json} records rank claims whose explicit
factor matrices we do not (yet) hold and which no constructor of ours
re-derives --- so we neither materialise nor re-derive them, but still
trust and attribute the published rank. Each entry has shape, field,
commutativity, rank, addition count (or \code{null}), source
citation, and discovery status. This lets us show the result in
comparison tables without pretending to have the scheme. The cited
layer is generated by \code{GenerateCitedBounds.java}.

\subsection{Lineage records}

The \emph{lineage} of a scheme is the exact sequence of composition
steps that produced it, and it is \textbf{replayable}: a
\code{LineageReplayer} reconstructs the full concrete \((U, V, W)\)
factor matrices from the lineage alone. This is essential --- it lets
the catalog store a large scheme as a lineage-only \emph{stub} (no
factor matrices on disk) and rebuild them on demand, and it makes every
composed scheme reproducible from its building blocks rather than
trusted as opaque numbers. The lineage is a DAG whose leaf nodes are
atoms and whose internal nodes are the derivation operators of
Section~\ref{sec:strategies}, one node type per operator. The lineage
is serialised twice: as a
human-readable function-call form (e.g.\
\code{ConcatRight(KronProduct(strassen, winograd), SAME0)})\, and as
a machine-readable JSON DAG with shared-subtree references.

The lineage is load-bearing for three reasons:
\begin{enumerate}[itemsep=2pt]
  \item \textbf{Independent audit.} Because replay is independent of
    any stored matrices, re-deriving from leaves and comparing catches
    silent corruption of an on-disk \((U, V, W)\).
  \item \textbf{Search-without-materialisation.} The exhaustive derivation
    search (Section \ref{sec:search}) writes a strategy descriptor
    -- effectively a lineage stub -- before deciding whether to
    materialise. This lets the search iterate without paying for
    materialisation.
  \item \textbf{Attribution archaeology.} When a bulk-imported
    catalog re-encodes a scheme originally due to an earlier author,
    the lineage records the importing source while a separate
    attribution field (\code{attribution\_for\_rank}) records the
    earliest source that established the bound. Comparison tables
    pull from the latter.
\end{enumerate}

\subsection{Verification}

Four verification modes run independently of the search loop.
\begin{description}[itemsep=2pt]
  \item[Spot-check.] \code{Verifier.passesRandomMatmulSpotCheck}
    samples random \(A\), \(B\), evaluates the scheme via its
    \((U, V, W)\), and compares against the direct product ---
    \(O(S\,r\,(nm{+}mp{+}np))\) for \(S\) samples. Cheap enough to run
    at scheme load time, after every materialisation, and as a nightly
    catalog-wide sweep; probabilistic, but a few samples over a field
    make a false pass vanishingly unlikely.
  \item[Symbolic.] \code{SymbolicVerifier} checks the trilinear
    identity \emph{exactly} (an algebraic equality, no floating-point
    epsilon) and additionally rejects any coefficient outside the
    declared field. Conclusive, but \(O(r\,nm\,mp\,np)\) --- the full
    tensor, \(O(r\,n^6)\) in the cubic case --- so it is reserved for
    the small-dimension schemes where that is affordable.
  \item[Lineage replay.] \code{LineageReplayer} re-constructs the
    scheme from its lineage record, requiring the same
    \((U, V, W)\). Catches divergence between the on-disk scheme and
    the formula it claims to instantiate.
  \item[Field widening.] \code{FieldWideningSweep} re-tags a scheme
    claimed for a wide field down to the strictest field its
    coefficients inhabit (e.g.\ an \(\Reals\)-tagged scheme with all
    coefficients in \(\Rationals\) is really \(\Rationals\)). Within
    characteristic~0 the coefficients \emph{guarantee} the narrower
    tag --- matmul-validity is a system of integer polynomial
    identities, unchanged by restricting to a subfield --- so this is a
    certain re-tag, not a heuristic.
\end{description}

\section{Derivation strategies}\label{sec:strategies}

A \emph{derivation} strategy takes one or more catalog entries and produces a new bilinear
algorithm at a target shape --- combining them upward (\emph{composition}: Kronecker,
concatenation), spanning a size gap with the mirror pair \emph{decomposition} (tile a large
target into blocks handled by a smaller \emph{outer} scheme --- a small-shape scheme
generating a large one) and its reverse \emph{projection} (restrict a larger scheme down to
a sub-shape), or locally reducing the product set (peel, pair fusion); the strategies below
are what the search and materialiser currently know how to do. Convention: the \emph{outer}
scheme is the one whose bilinear structure is reused, its \emph{sub-schemes}
(\emph{leaves}) the ones computing each outer bilinear product at a smaller shape.

\paragraph{A unifying view, and its ceiling: smoothing \(\omega\) across shapes.}
Almost every strategy below is \emph{compositional}: it synthesises a scheme at a target
shape out of schemes at \emph{other} shapes --- the sub-blocks of a decomposition, the
factors of a Kronecker product, the concatenands of an axis split --- shapes that need not
be anywhere near the target (Kronecker \emph{multiplies} them, landing far from either
factor). Read on the landscape of \(\omega\) over shapes, these strategies \emph{interpolate}
in \(\omega\)-value: the rank they produce at \nmpshape{n}{m}{p} is an arithmetic combination
(sum, product) of the ranks --- hence the \(\omega\) --- of the shapes they consume, so the
target's \(\omega\) is a blend of its inputs'. Hence a structural ceiling: the purely additive or
multiplicative blends --- axis concatenation, the plain Kronecker product, plain block-split
decomposition --- \emph{cannot manufacture an \(\omega\) below all of their inputs}: if
every shape they draw on has higher \(\omega\), the arithmetic combination does too. Pushing below every input needs a mechanism that rewrites the scheme's own product
set rather than gluing fixed sub-schemes together; flip-graph and meta-flip-graph
moves~\cite{perminov2025fast,kauersmoosbauerwood2026} are exactly such a mechanism, which is
why they reach ranks our compositional closure cannot.

Two strategies escape this ceiling. The serendipitous (bud) product
(Section~\ref{ssec:serendipitous}) --- a Kronecker \emph{variant}, so it is this variant and
not the plain Kronecker product that can help --- lands \emph{below} the naïve rank product
\(r_1 r_2\) by borrowing and correcting across the factors; and \emph{projection}
(Section~\ref{ssec:projmargin}), more speculatively, can restrict a parent below the blend
by eliminating dead products. Both inject a correction outside the sum or product of the
consumed ranks. That they \emph{do} drop the rank below any pure blend is not hypothetical:
projection and the serendipitous product each root dozens of our SOTA schemes (39 and 29
respectively, Section~\ref{ssec:win-anatomy}), shapes whose best held rank comes from exactly
this correction --- so at those shapes the \emph{local} \(\omega\) is measurably lower, and
we can point to the case. What stays open is \emph{propagation}, not the local drop: a
lowered \(\omega\) at one shape lowers the asymptotic exponent, or lifts another shape, only
if that shape consumes it, and we have no argument that a local correction propagates the way
a flip-graph base improvement provably does --- a local mechanism of unproven global reach.

Table~\ref{tab:strategies} is the section's map. The rows that do not reduce below
\(\sum_k r_k\) fall under the non-overlap property of Section~\ref{sec:nonoverlap}; the
reducing rows apply only \emph{known} identities at predict time, deliberately kept as
separate constructors so that pure recombination never \emph{discovers} sharing.

\begin{table}[h]
\centering
\caption{Derivation strategies used by the catalog. ``Rank \(<\!\sum\)?'' = whether the
operator can produce a scheme with fewer than \(\sum_k r_{\text{sub-}k}\) bilinear products.
\emph{Predict-time} means the reduction is a \emph{known} identity, counted before
materialisation --- no operator \emph{discovers} new sharing while materialising (the
non-overlap property, Section~\ref{sec:nonoverlap}).}
\label{tab:strategies}
\begin{tabular}{lll}
\toprule
Operator & Source & Rank \(<\!\sum\)? \\
\midrule
Axis-flip orbit & folklore + this paper \S\ref{ssec:axis-flip} & no \\
Kronecker product & Strassen 1969 & no \\
Axis concatenation & folklore & no \\
Recombination with allocation & Strassen-style block split & no\textsuperscript{$\ddagger$} \\
Serendipitous product & Smith 2002; Perminov 2026 (buds) & yes (predict-time) \\
Downward projection & Sedoglavic 2017; margin this paper \S\ref{ssec:projmargin} & yes (predict-time) \\
Output peel & DIS09 / Islam 2009 \(\gamma_5\) & no \\
Pair fusion (Pan TA family) & Pan 1980 / KGP 2026 & yes (paired) \\
Leaf-level pairing & Pan 1980 + this paper \S\ref{ssec:pair-fusion} & yes (paired) \\
Disjoint-sum / $\tau$\textsuperscript{$\dagger$} & Pan 1980 / Sch\"onhage 1981 / SZ 2025 & n/a (imported) \\
\bottomrule
\end{tabular}

\smallskip
\noindent\textsuperscript{$\dagger$}\,Imported as explicit factor matrices, never constructed by our search (Section~\ref{ssec:disjoint-sum}).\\
\noindent\textsuperscript{$\ddagger$}\,Exactly \(\sum_k r_{\text{sub-}k}\) by Theorem~\ref{thm:nonoverlap}, bar an accidental leaf coincidence (probed to zero, Section~\ref{sec:nonoverlap}).
\end{table}

\subsection{Axis concatenation (block-additive composition)}
\label{ssec:concat}

If two schemes share two of the three axes, they concatenate along the third: a
\(\nmpshape{n}{m}{p_1}\) scheme of rank \(r_1\) and a \(\nmpshape{n}{m}{p_2}\) scheme of
rank \(r_2\) compose into a \(\nmpshape{n}{m}{p_1 + p_2}\) scheme of rank \(r_1 + r_2\),
computing the two halves of the result independently; analogously along \(n\) and \(m\).
Sometimes called ``plain split along a dimension''. Trivial but weight-pulling: axis
concatenation is the direct (root) derivation behind \(48\) of the \(908\) shapes where our
derived scheme strictly beats every external catalog over \(\Rationals\) (\(5\%\);
Section~\ref{ssec:win-anatomy}) --- the cheapest route to ``rectangular'' shapes (one large
axis) from cubic catalog entries.

\subsection{Kronecker product (multiplicative composition)}
\label{ssec:kronecker}

Given a scheme \(A\) at \(\nmpshape{n_1}{m_1}{p_1}\) of rank \(r_1\) and a scheme \(B\) at
\(\nmpshape{n_2}{m_2}{p_2}\) of rank \(r_2\), the Kronecker product \(A \otimes B\) is a
scheme at \(\nmpshape{n_1 n_2}{m_1 m_2}{p_1 p_2}\) of rank \(r_1 r_2\) --- the classical
recursive substitution behind Strassen's \(\omega \le \log_2 7\): applied recursively to
\(\nmpshape{2^k}{2^k}{2^k}\) it gives \(7^k\) bilinear products. The search enumerates
2-fold Kronecker chains over the pool; higher-arity chains appear via repeated composition.

\subsubsection{Serendipitous product (bud decomposition)}
\label{ssec:serendipitous}

The serendipitous product is a multiplicative composition that, unlike the Kronecker
product, can land \emph{below} the naïve rank product \(r_1 r_2\). Introduced by
Smith~\cite{smith2002fast} and generalised by Sedoglavic in FMM-Lille, we follow the bud-based
formalisation of Perminov~\cite{perminov2026serendipitous} (Definitions \(2.9\)--\(2.12\),
\(\S 2.6\)). Kauers, Moosbauer and Wood~\cite{kauersmoosbauerwood2026} independently
formalised the same fusion (their Section~3, ``divide less, conquer more''): recursive calls
that share an input, or whose output is used in several positions, are treated as a single
larger multiplication --- our \(U\)- and \(W\)-buds --- yielding
\(\omega(\nmpshape{6}{6}{6}) \le 2.8019\). Their construction is a Sch\"onhage-style
\emph{generalized asymptotic sum inequality} applied to the structure of a \emph{given}
decomposition: for \(\nmpshape{6}{6}{6}\) they analyse the flip-graph-found Moosbauer--Poole
rank-\(153\) decomposition to extract its shareable sub-blocks, rather than searching for
them directly. Finding that shareable structure is the same objective as our bud-richness
selection, but it is extracted per decomposition, not produced by a general constructor;
whether a deterministic constructor (a targeted basis change in the spirit of the de~Groote
solve used for kin-row unification by Schwartz and Zwecher~\cite{schwartzzwecher2025}) could
replace the analysis is open.

The key object is a \emph{bud}: a pair (or group) of rank-one terms sharing the
\emph{same} \(u\) vector (or the same \(v\), or the same \(w\)) up to scaling, since
\(u \otimes v \otimes w\) is invariant under
\(u \mapsto \lambda u,\ w \mapsto \lambda^{-1} w\). Buds are basis-free: the shared vector
may be any linear combination of entries, not a single coordinate. Grouping the \(r\)
terms of a scheme by their buds decomposes it into \emph{elementary matmul tensors}
\[
  \mathcal{T} \;=\; \sum_{i} S_i \, \nmpshape{N_i}{M_i}{P_i}, \qquad
  r \;=\; \sum_{i} S_i \, N_i M_i P_i ,
\]
where a \(U\)-bud of \(k\) terms is an \(\nmpshape{1}{1}{k}\) block, a \(V\)-bud a
\(\nmpshape{k}{1}{1}\), a \(W\)-bud a \(\nmpshape{1}{k}{1}\), combined buds give blocks such
as \(\nmpshape{2}{1}{2}\), and every term in no bud is a trivial \(\nmpshape{1}{1}{1}\).
The serendipitous product with a second scheme \(\nmpshape{n_2}{m_2}{p_2}\) applies it to
each elementary block independently and --- crucially --- realises each enlarged block by
the \emph{best known} scheme for its format rather than by the naïve product:
\[
  \mathcal{T} \otimes \nmpshape{n_2}{m_2}{p_2}
  \;=\; \sum_{i} S_i \, \nmpshape{N_i n_2}{M_i m_2}{P_i p_2}, \qquad
  r_s \;=\; \sum_{i} S_i \cdot R(\nmpshape{N_i n_2}{M_i m_2}{P_i p_2}).
\]
A saving over the Kronecker product arises whenever
\(R(\nmpshape{N_i n_2}{M_i m_2}{P_i p_2}) < N_i M_i P_i \cdot R(\nmpshape{n_2}{m_2}{p_2})\)
for at least one block; for trivial blocks the two coincide, so a bud-free scheme reduces
exactly to the Kronecker product.

\paragraph{Worked example: \(\nmpcount{}{4}{8}{12}{272}\).}
The Hopcroft--Kerr \(\nmpcount{}{2}{3}{4}{20}\) scheme has four \(U\)-buds among its twenty
terms, so it decomposes as \(12\cdot\nmpshape{1}{1}{1} + 4\cdot\nmpshape{1}{1}{2}\) (note
\(12 + 4\cdot 2 = 20\)). Taking the serendipitous product with \(\nmpshape{2}{4}{2}\) and
using \(R(\nmpshape{2}{4}{2})=14\) and \(R(\nmpshape{2}{4}{4})=26\),
\[
  \nmpshape{2}{3}{4}\otimes\nmpshape{2}{4}{2}
  \;=\; 12\cdot\nmpshape{2}{4}{2} + 4\cdot\nmpshape{2}{4}{4}, \qquad
  r_s = 12\cdot 14 + 4\cdot 26 = 272,
\]
a scheme for \(\nmpshape{4}{12}{8}\) (\(\nmpshape{4}{8}{12}\) up to axis permutation),
below the Kronecker value \(20\cdot 14 = 280\). The FMM catalog's~\cite{fmmlille} shorthand
``\(({\nmpshape{2}{3}{4}}{:}20 - 8)\otimes\nmpshape{2}{4}{2} + 4\,\nmpshape{2}{4}{4}\)''
records exactly this: eight of the twenty terms live in buds and become the four
\(\nmpshape{2}{4}{4}\) blocks; the other twelve stay \(\nmpshape{2}{4}{2}\).

\paragraph{Minimal example (and why it is not a free lunch).} The smallest serendipitous
saving is Strassen's --- but the product \emph{consumes} \(R(\nmpshape{2}{2}{2})=7\), it
does not produce it. The base \(\nmpshape{1}{1}{2}\) computes \(a\,b_1\) and \(a\,b_2\),
which share the single \(A\)-form \(a\): one \(U\)-bud of size \(2\), i.e.\ the block
decomposition \(\nmpshape{1}{1}{2} = 1\cdot\nmpshape{1}{1}{2}\). Its serendipitous product
with \(\nmpshape{2}{2}{1}\) (rank \(R(\nmpshape{2}{2}{1})=4\)) sends that bud block to the
enlarged format \(\nmpshape{1\cdot 2}{1\cdot 2}{2\cdot 1} = \nmpshape{2}{2}{2}\), realised
by its best known scheme:
\[
  r_s \;=\; 1\cdot R(\nmpshape{2}{2}{2}) \;=\; 7
  \;<\;
  \underbrace{(1\cdot 1\cdot 2)\cdot R(\nmpshape{2}{2}{1})}_{\text{Kronecker product}}
  \;=\; 2\cdot 4 \;=\; 8 .
\]
The bud fuses the Kronecker product's two independent \(\nmpshape{2}{2}{1}\) blocks (eight
products reusing the same \(A\) entries --- that reuse \emph{is} the bud) into one
\(\nmpshape{2}{2}{2}\) block; the saving \(7<8\) exists only because \(\nmpshape{2}{2}{2}\)
is sub-additive --- only because Strassen supplies \(R(\nmpshape{2}{2}{2})=7\) as the
enlarged-block realisation. Buds expose \emph{which} products can merge; a sub-additive
enlarged block --- an \emph{input} to the product --- is what turns the merge into a saving.

\paragraph{Bud-richness, rank, and which schemes to retain.} The rank-\emph{minimal}
scheme for a format is often the \emph{wrong} base for a serendipitous product. Build
\(\nmpshape{6}{6}{6}\) as an inner \(\nmpshape{3}{3}{3}\) (Laderman \(R=23\); Smirnov
\(R(\nmpshape{3}{3}{6})=40\), note \(40<2\cdot 23\)) times one of two \(\nmpshape{2}{2}{2}\)
bases, under either operator:
\begin{center}
\begin{tabular}{lcc}
\toprule
\(\nmpshape{2}{2}{2}\) base & Kronecker & serendipitous \\
\midrule
Strassen, rank \(7\) --- no buds & \(7\cdot 23 = 161\) & \(161\) (\(=\) Kronecker) \\
na\"ive, rank \(8\) --- four \(U\)-buds & \(8\cdot 23 = 184\) & \(4\cdot 40 = \mathbf{160}\) \\
\bottomrule
\end{tabular}
\end{center}
Strassen's \(u,v,w\) are pairwise non-proportional, so it has no buds and serendipity buys
it nothing over Kronecker. The na\"ive rank-\(8\) scheme is bud-rich --- each \(A\)-entry
feeds two outputs, four \(U\)-buds \(4\cdot\nmpshape{1}{1}{2}\) --- and fusing each with the
inner \(\nmpshape{3}{3}{3}\) into a \(\nmpshape{3}{3}{6}\) gives \(4\cdot 40 = 160\): the
best of the four, from the \emph{worse}-rank base. A serendipitous base is chosen for
bud-richness, not rank. Since \(r_s\) depends only on the bud \emph{multiset}
\(\{S_i\,\nmpshape{N_i}{M_i}{P_i}\}\), what must be retained per format is the
\emph{Pareto frontier} in \((\text{rank},\,\text{bud multiset})\): a higher-rank scheme is
worth keeping iff its richer buds win some serendipitous product the lower-rank scheme
cannot; a scheme of no-better rank whose bud multiset is contained in another's is
dominated. Keeping only the rank-minimal scheme per format would discard the winning base.

\paragraph{How operations act on buds.} Bud structure is not preserved uniformly, which is
why it is tracked explicitly. \emph{Rank reduction destroys buds}: it replaces un-mixed
products by independent linear combinations, making \(u,v,w\) pairwise non-proportional ---
empirically every rank-minimal small-format scheme we hold is bud-free. \emph{Kronecker
preserves and multiplies buds} (a product term is proportional to another iff both factors
are), so each bud lifts to the product. \emph{Axis permutation (\(S_3\)) permutes the bud
type}, count unchanged. \emph{Projection can only shrink buds}, while concatenating a
scheme with a copy of itself \emph{creates} \(U\)-buds (a twin with the same \(u\)).

\paragraph{Aligning buds under isotropy.} The bud structure of
\(\mathcal{T}_1\otimes\mathcal{T}_2\) depends on how the factors' bud axes line up: a
size-\(a\) \(U\)-bud aligned with a size-\(b\) \(U\)-bud yields a size-\(ab\) \(U\)-bud; a
crossed pair (\(U\) against \(V\)) reinforces nothing. A basis change preserves
proportionality, so the continuous \(GL\) part of the isotropy group leaves the bud
partition \emph{invariant}; only the discrete \(S_3\) permutation moves buds between axes.
Hence two levers: orienting the bud-provider \(\mathcal{T}_1\) by \(S_3\) selects which
target axis carries the \(k\)-fold enlargement (a \(U\)-bud gives
\(\nmpshape{n_2}{m_2}{k p_2}\), a \(V\)-bud \(\nmpshape{k n_2}{m_2}{p_2}\), whose \(R\)
differ), so the search tries all three orientations; and aligning both factors' bud axes
maximises the \(ab\) reinforcement for reuse as a base.

\paragraph{What it is.} The serendipitous product is \emph{exact} and
\emph{combinatorial}: it reads buds off the base scheme's factor matrices and looks up
sub-block ranks --- no numerical optimisation, no border rank. Our bud recognizer follows
Perminov's \(\S 2.6\), and the assembled scheme is certified by the exact symbolic verifier.

\paragraph{Smoothness governs the serendipitous hit rate.} A serendipitous product needs
the target to factor as \(\nmpshape{n_1}{m_1}{p_1}\otimes\nmpshape{n_2}{m_2}{p_2}\) with
\emph{both} factors non-trivial, and \(\nmpshape{N}{M}{P}\) has \(d(N)\,d(M)\,d(P)\) such
factorisations (per-axis divisor counts). A \emph{smooth} format (many prime factors per
side, e.g. \(30=2\cdot3\cdot5\)) exposes far more base/inner splits than a prime-power
format, each an independent chance for a bud-rich base to meet a sub-additive enlarged
inner. Writing \(\Omega(NMP)\) for the number of prime factors of \(N\,M\,P\) with
multiplicity, Table~\ref{tab:serendip-smooth} reports, over all \(5456\) non-commutative
formats in the catalogue, how often the \emph{best} scheme we hold is serendipitous.

\begin{table}[t]
\centering\small
\begin{tabular}{rrrr}
\toprule
\(\Omega(NMP)\) & formats & serendipity best & \% \\
\midrule
\(\le 5\) & 2013 & 0 & 0.0 \\
6 & 1122 & 6 & 0.5 \\
7 & 979 & 12 & 1.2 \\
8 & 654 & 11 & 1.7 \\
9 & 389 & 9 & 2.3 \\
10 & 178 & 3 & 1.7 \\
\(\ge 11\) & 121 & 0 & 0.0 \\
\midrule
all & 5456 & 41 & 0.75 \\
\bottomrule
\end{tabular}
\caption{How often the best known scheme for a format is a serendipitous product, bucketed
by the smoothness \(\Omega(NMP)\). Winning formats have mean \(\Omega = 7.78\), against
\(6.29\) for the whole population.}
\label{tab:serendip-smooth}
\end{table}

Below \(\Omega=6\) no serendipitous scheme is ever best: the smallest non-trivial
base/inner pair already needs \(\Omega\ge6\) (e.g.
\(\nmpshape{2}{3}{7}\otimes\nmpshape{3}{3}{3}\)). Above the floor the win rate tracks
smoothness, peaking near \(\Omega=9\); the fall-off at \(\Omega\ge11\) is a coverage
artefact (those formats lie beyond our materialisation cap), not a ceiling on the method.
This concerns the practical per-format implied exponent, not the asymptotic \(\omega\);
operationally: serendipitous effort goes to smooth formats first, and prime cubes are
direct-construction targets where the method cannot help.

\subsection{Decomposition: block-splitting and the sub-problem multiset}
\label{ssec:decomposition}

Decomposition, like the composition operators above, builds \emph{up} in shape: it
generates a larger target from smaller schemes (a better small scheme lifts the target ---
an upward edge in the impact DAG of Section~\ref{sec:search}). What differs is only the
direction of \emph{construction} --- instead of gluing two given schemes, it starts from the
target shape and tiles it with a smaller \emph{outer} scheme. Only \emph{projection}
(Section~\ref{ssec:projmargin}) builds \emph{down}, restricting a larger scheme to a
sub-shape.
\emph{Decomposition} splits a large \(\nmpshape{N}{M}{P}\) product into a grid of blocks
computed by a smaller \emph{outer} scheme of shape \(\nmpshape{n_o}{m_o}{p_o}\) and rank
\(r_o\): cut each target axis into that many parts and let each of the outer scheme's
\(r_o\) bilinear products compute one block product of the resulting \(n_o\times m_o\) by
\(m_o\times p_o\) grid --- a \(\nmpshape{2}{2}{2}\) outer scheme turns the target into
\emph{seven} sub-problems, Laderman's \(\nmpshape{3}{3}{3}\) into twenty-three. Listing
those sub-problems by shape, with multiplicity, gives the
\emph{sub-problem multiset} of the decomposition; each is computed independently, so the
realised rank is just the sum of their ranks,
\(R(\nmpshape{N}{M}{P}) \le \sum_{\sigma\in\text{multiset}} R(\sigma)\), and the multiset
is the \emph{only} thing that fixes the cost --- order and labelling are irrelevant. With
all parts equal the sub-problems are identical and the sum collapses to
\(r_o\cdot R(\text{block})\), exactly the Kronecker product (Section~\ref{ssec:kronecker});
the interesting regime is \emph{unbalanced} parts, where the sub-problems differ.

\paragraph{Padding, zero-sparing, and the max.} The realised size of each sub-product is
where padding enters --- the heart of the matter. Split each axis in two with the larger
part first, \(M = M_1 + M_2\), \(N = N_1 + N_2\), \(P = P_1 + P_2\). Each outer product is
a (sum of \(A\)-blocks)\(\,\cdot\,\)(sum of \(B\)-blocks); adding blocks of unequal shape
embeds the smaller into the larger envelope with zeros, and \emph{those zeros are never
multiplied}: a product is realised at its genuine nonzero overlap --- the \emph{max} of the
real parts on each output axis, the \emph{min} on the contraction axis. A product that
sums across an axis is charged that axis's \emph{large} part; a product using a
\emph{single} block stays at the part it touches. Hence the expensive corner
\(\nmpshape{M_1}{N_1}{P_1}\) is paid by every all-sum product; the cheap all-small corner
\(\nmpshape{M_2}{N_2}{P_2}\) only by a product
single-block on \emph{every} axis. This padding-max --- not the rank, not the additions ---
is what will separate Strassen from Winograd on unbalanced splits
(Sections~\ref{ssec:strassen-vs-winograd} and~\ref{ssec:axis-flip}); \textbf{which
template} and \textbf{which orientation} are the two knobs recombination with allocation
(Section~\ref{ssec:recombination}) searches over.

\subsubsection{Strassen, Winograd, and the structural difference}
\label{ssec:strassen-vs-winograd}

The two canonical rank-7 algorithms for \(\nmpfield{\Reals}{2}{2}{2}\) are Strassen 1969
and Winograd's 1971 reformulation. The punchline, stated once: used as a
\(\nmpshape{2}{2}{2}\) template over a split, at a fixed orientation they induce the
\emph{same} multiset of sub-problems and differ only in additions (Strassen 1969: 18;
Winograd 1971: 15); it is the axis-flip orbit of Section~\ref{ssec:axis-flip},
re-orienting the template, that changes the multiplications --- and where the two diverge.
The seven products of each, sized under zero-sparing, are tabulated in
Appendix~\ref{app:worked-products}.

\paragraph{The generic multiset.}
Collecting the realised sizes of either scheme's seven products gives the same multiset:
\[
  2\,\nmpshape{M_1}{N_1}{P_1} + \nmpshape{M_2}{N_1}{P_1} + \nmpshape{M_1}{N_1}{P_2}
  + \nmpshape{M_1}{N_2}{P_1} + \nmpshape{M_1}{N_2}{P_2} + \nmpshape{M_2}{N_2}{P_1}.
\]
The largest corner \(\nmpshape{M_1}{N_1}{P_1}\) appears twice, the all-small corner
\(\nmpshape{M_2}{N_2}{P_2}\) not at all, and the split is \emph{not} symmetric: the
contraction axis \(N\) carries its smaller part in three of the seven products, \(M\) and
\(P\) in two each -- so one minimises the small-part products by making \(N\) the largest
axis. With equal parts (even dimensions) the multiset collapses to
\(7\,\nmpshape{M/2}{N/2}{P/2}\), the textbook recursion.

\paragraph{Worked example: \(\nmpshape{3}{3}{3}\).}
Split \(3 = 2{+}1\) on every axis (\(M_1{=}N_1{=}P_1{=}2\), \(M_2{=}N_2{=}P_2{=}1\)). Both
schemes yield
\[
  2\,\nmpshape{2}{2}{2} + \nmpshape{1}{2}{2} + \nmpshape{2}{2}{1} + \nmpshape{2}{1}{2}
  + \nmpshape{2}{1}{1} + \nmpshape{1}{1}{2},
\]
i.e.\ \(2\cdot 7 + 4 + 4 + 4 + 2 + 2 = 30\) leaf multiplications (recursing each
\(\nmpshape{2}{2}{2}\) with the same rank-7 scheme) -- against \(7\cdot 7 = 49\) for the
naive pad-to-even \(\nmpshape{4}{4}{4}\) embedding and \(3^3 = 27\) schoolbook. The
\(49 \to 30\) gap is the zero-sparing; the further drop from re-orienting the template is
deferred to Section~\ref{ssec:axis-flip}.

\paragraph{Why one needs a \emph{set} of bases.} Strassen and Winograd are only two points
in the space of rank-7 \(\nmpshape{2}{2}{2}\) schemes, and what matters at an unbalanced
split is how many single-block products a scheme has --- hence which sub-problem multisets
its orbit can reach. The pool therefore carries a \emph{set} of rank-7
\(\nmpshape{2}{2}{2}\) bases inequivalent in multiset behaviour, not
just the two textbook ones: we enumerate the rank-7 normal forms (Heun 1994; de~Groote's
classification) exhaustively, one representative per axis-flip orbit. The same idea applies
at any base shape, but the space of rank-\(r\) schemes at \(\nmpshape{3}{3}{3}\) and up is
far too large to enumerate, so there we fall back on the imported/known schemes plus the
axis-flip orbit of each.

\subsubsection{Axis-flip orbit}
\label{ssec:axis-flip}

Reversing the row order of any factor matrix \((U, V, W)\) preserves shape and rank; the
three independent flips generate an orbit of size 8 per scheme (the elementary abelian
\(\mathbb{Z}_2^3\); see the glossary). Under uniform allocations the
flip is a trivial relabeling and the orbit collapses; under \emph{unbalanced} allocations
such as \((9, 8)^3\) the orbit members induce different padding patterns and hence
different total ranks, since flipping the outer scheme permutes which block-index of the
target receives the padding penalty (Section~\ref{ssec:recombination}). The pool stores
one representative per orbit; the search re-enumerates the 8 masks analytically
(per-product block-support bitmasks; Section~\ref{sec:search}) at constant-factor cost.

\paragraph{Why the orbit separates Strassen from Winograd.} This is where the two rank-7
templates stop being interchangeable. Strassen's seven products are all two-block sums, so
under \emph{any} orbit member each still spans a full part on every axis: Strassen keeps
its two full \(\nmpshape{M_1}{N_1}{P_1}\) products and can never realise the all-small
corner. Winograd's two single-block products (\(A_{11}B_{11}\), \(A_{12}B_{21}\))
\emph{can} be flipped onto the small blocks -- one becomes
\(A_{22}B_{22} = \nmpshape{M_2}{N_2}{P_2}\) -- so the best Winograd orbit member drops to a
single full product and a strictly smaller multiset whenever the split is unbalanced (the
gain vanishes only at equal parts). Concretely, \(\nmpshape{17}{17}{17}\) at the
\((9,8)^3\) allocation costs \(2930\) multiplications for
the best Winograd orbit member -- its lone all-small \(\nmpshape{8}{8}{8}\) product
replacing one \(\nmpshape{9}{9}{9}\) -- against \(2940\) for Strassen. It is a genuine
\emph{multiplication} saving, distinct from Winograd's lower addition count.

\subsection{Recombination with allocation}
\label{ssec:recombination}

The central composition operator in the catalog: the decomposition of
Section~\ref{ssec:decomposition} with freely chosen part sizes. A \emph{small} catalog
scheme --- the outer \emph{atom}, e.g.\ Strassen \(\nmpshape{2}{2}{2}\), typically \(2\) or
\(3\) per dimension --- multiplies a larger target \(\nmpshape{n}{m}{p}\): choose three
\emph{allocations} (a partition of \(n\) into \(n_o\) parts, of \(m\) into \(m_o\), of
\(p\) into \(p_o\)); each of the atom's \(r_o\) outer products becomes a smaller matmul on
the corresponding blocks, looked up in the catalog as a sub-scheme, and the assembled
algorithm has rank \(\sum_k r_{\text{sub-}k}\). Uniform allocation reduces to the Kronecker
product (Section~\ref{ssec:kronecker}); non-uniform allocation --- e.g.\
\(\nmpshape{17}{17}{17}\) via Strassen at \((9, 8)^3\) --- covers the gap left by Kronecker
chains, at the price of \emph{padding}: a sub-shape derived from two unequal blocks
inherits the larger block's size on the relevant axis.

\subsection{Projection margin: a death-rate score}
\label{ssec:projmargin}

Projection is the downward operator: restrict a larger scheme to a smaller shape and
dead-code-eliminate the products that die. A higher-rank parent can thereby project to a
\emph{lower}-rank child, with a clean closed form that turns ``keep the structurally better
scheme, not the lower-rank one'' into a computable criterion --- the downward analogue of
the bud score (Section~\ref{ssec:serendipitous}).

A product survives the restriction iff \emph{each} of its three factors \(U_k,V_k,W_k\)
still has a nonzero inside the kept sub-block; it dies as soon as \emph{any one} of them
vanishes there. Fix an axis, say the first (\(n\)); only \(U\) (input rows \(i\)) and
\(W\) (output rows \(i\)) carry an \(n\)-index. Write \(U_n(k)\) and \(W_n(k)\) for the
sets of \(n\)-indices column \(k\) of \(U\), resp.\ \(W\), touches. Dropping index \(i\)
kills product \(k\) exactly when \emph{either} its \(\alpha\)-form collapses
(\(U_n(k)=\{i\}\): all of \(U_k\) sits in the dropped input row) \emph{or} it writes only
the dropped output (\(W_n(k)=\{i\}\)). The \emph{private rank} at \(i\) is the count of
such products, \(\rho_n(i)=\#\bigl\{k:\ U_n(k)=\{i\}\ \text{or}\ W_n(k)=\{i\}\bigr\}\) (an
\emph{or}, not an \emph{and}: a product with \(U_n(k)=\{i\}\), \(W_n(k)=\{i'\}\) dies under
dropping \(i\) \emph{or} \(i'\)). Each axis uses two of the three matrices ---
\(n\!\to\!\{U,W\}\), \(m\!\to\!\{U,V\}\), \(p\!\to\!\{V,W\}\) --- and the \emph{projection
margin} of \(S\) (single index, best axis) is
\[
  \mu(S)=\max_{A\in\{n,m,p\}}\ \max_{i}\ \rho_A(i), \qquad
  R_{\text{after}} \;=\; R_{\text{before}}-\mu(S)
\]
exactly as a \emph{survivor count} (an upper bound on the child's rank, tightened below),
for projecting out the best single index; dropping \(\delta\) indices on one axis
generalises to \(\mu^{(\delta)}_A(S)=\max_{|D|=\delta}\#\{k:\mathrm{supp}_A(k)\subseteq
D\}\), with \(R_{\text{after}}=R_{\text{before}}-\mu^{(\delta)}\).

Projection therefore minimises \(R-\mu\), \emph{not} \(R\): a scheme is worth keeping over
a lower-rank rival whenever \(R_S-\mu(S)<R_{S'}-\mu(S')\), i.e.\ on the
\((\text{rank},\,\text{projection margin})\) Pareto frontier. The canonical instance is
\(\nmpshape{29}{30}{30}\): the rank-best parent, a flat \(\nmpcount{}{30}{30}{30}{12688}\)
(Schwartz--Zwecher 2025, ALS-style), has \(\mu=62\) and projects to \(12626\); the
higher-rank \emph{structured} \(\nmpcount{}{30}{30}{30}{12710}\) (Pan trilinear
aggregation) has \(\mu=122\) and projects to \(12588\), matching FMM. Structure localises
whole product-slabs to boundary slices (large \(\rho\)); a flat decomposition spreads
every product across all indices (small \(\rho\)), projecting worse despite lower rank.

\paragraph{Survivor count versus true projected rank.} The identity
\(R_{\text{after}}=R_{\text{before}}-\mu\) counts the products that \emph{survive} the
restriction; that count is exact, but only an \emph{upper bound} on the child's rank,
because restricting can make two survivors \emph{proportional} --- and a proportional pair
is one rank-one term, not two. The mechanism is a linear form losing a coordinate: suppose
two products agree on their \(\alpha\)- and \(\gamma\)-forms and differ only in their
\(\beta\)-forms, one reading \(b_{\cdot,k_1}+b_{\cdot,k_0}\) and the other
\(b_{\cdot,k_1}-b_{\cdot,k_0}\). Both survive the drop of output column \(k_0\) (each still
reads \(b_{\cdot,k_1}\)), yet \emph{after} the drop both compute \(b_{\cdot,k_1}\): the two
products coincide and fuse into one, for a \(-1\) that \(\mu\) never counts. (Over a
ternary base ``proportional'' forces ``\(\pm\)-equal'', so the fusion is a cheap sign
test.) This is the pair fusion of \S\ref{ssec:pair-fusion}, but \emph{induced} by the
projection rather than pre-existing: a drop can \emph{manufacture} the coincidence that
\S\ref{sec:nonoverlap} finds absent in un-projected schemes. The smallest instance is fully
explicit: our rank-\(11\) \(\nmpfield{\Rationals}{2}{2}{3}\) carries two products sharing the
input form \(A_{0,1}+A_{1,1}\) whose outputs differ only in column \(c=1\) --- one feeds
\(C_{0,0}\) and \(C_{0,1}\), the other only \(C_{0,0}\). Projecting to \(\nmpshape{2}{2}{2}\)
by dropping column \(c=1\) erases the distinguishing \(C_{0,1}\) entry; both products then
write \(C_{0,0}\) alone under the same input form and fuse, so nine survivors realise rank
eight. The effect stays small but is no toy-shape artefact ---
\(\nmpfield{\Rationals}{4}{7}{7}=144\) restricted to \(\nmpshape{4}{6}{7}\) leaves \(132\)
survivors but fuses to \(131\) --- and running the
ordinary fusion pass on the restricted scheme recovers it, turning the death-rate score
into the true projected rank. It also gives the rank-preserving meta-flip below a quantity
it can actually lower: support-unioning flips barely move survivor counts, but they do
create proportional pairs.

\paragraph{The opposite extreme: \(\mu=0\) and shared-rank neighbours.} Nothing forces a
drop to spare a multiplication. When no surviving product is \emph{private} to the dropped
slice --- every product touches a second kept index on that axis --- then \(\mu=0\) and the
projection inherits the parent's rank unchanged. The same cascade can do both: our
\(\nmpfield{\Rationals}{9}{17}{19}=1753\) drops one output column to
\(\nmpfield{\Rationals}{9}{17}{18}=1600\) (\(\mu=153\), a genuine reduction), yet the next
drop, to \(\nmpshape{9}{17}{17}\), kills nothing (\(\mu=0\)), so \(1600\) is the best known
for \emph{both} shapes. This is not the extra output column being free: it is only
\(R(\nmpshape{9}{17}{17})\le R(\nmpshape{9}{17}{18})=1600\) holding trivially (a projection
never raises rank) while no cheaper dedicated \(\nmpshape{9}{17}{17}\) scheme is known --- an
inherited \emph{upper bound}, not a proven rank, and the shared-rank plateau an artefact of
the catalog rather than of the shapes.

\paragraph{A small worked instance, and how to \emph{raise} the margin.} Ordering our
small catalog by \(\mu\), the high-margin schemes are uniformly the \emph{structured} ones
(Pan trilinear aggregation) and the flip-graph ones (Perminov), never the flat rank-minimal
decompositions: the Perminov \(\nmpcount{}{6}{6}{7}{183}\) has \(\mu=26\) on the \(p\)-axis
and projects to \(\nmpshape{6}{6}{6}\) at \(183-26=157\) (not the catalog optimum \(153\),
but the rank-minimal \(\nmpshape{6}{6}{7}\) scheme projects worse). At \emph{small}
dimension, however, the flatter low-rank cube still wins projection outright (the rank gap
dwarfs the margin gap); the structured-cube advantage only flips the winner once the ranks
converge, as at \(\nmpshape{30}{30}{30}\) above. Constructing a scheme of the same rank but
higher \(\mu\) is open (parallel to the bud question); a promising route is a
rank-preserving meta-flip, searched for the \(\mu\)-maximal representative of a rank class.
Here \(\mu\) \emph{selects} among existing schemes. Because projection strictly decreases
dimension, a new arrival with \(\mu>0\) propagates as a bounded, acyclic, event-driven
downward wave; how the search schedules these waves to a fixpoint is described in
Section~\ref{ssec:multipass}.

\subsection{Zero-peeling: input- and output-side (DIS09 \(\gamma_5\))}
\label{ssec:peel}

When a recombination over-allocates an axis and pads the excess with zeros, the padded
rows/columns are dead and the affected sub-products can be computed at a smaller shape.
Two directions, dual under the matmul tensor's \(S_3\) symmetry: \emph{input-side} (a
sub-product reads \(A\) only in some rows, or \(B\) only in some columns --- shrink its
input extents) and \emph{output-side} (its \(W\)-support lands only in positions peeled
away after un-padding --- shrink its output extents; the Drevet--Islam--Schost 2009 /
Islam 2009 \(\gamma_5\) reduction). The peel-aware recombination shrinks each sub-product
to the \emph{elementwise minimum} of both, and the block-split search enumerates the
(allocation, peel) patterns automatically (overshoot bounded per axis). The canonical
\(\gamma_5\) case -- \(\nmpshape{3}{3}{3}\) via padded \(\nmpshape{4}{4}{4}\) Strassen, the
corner product collapsing \(\nmpshape{2}{2}{2}{=}7 \to \nmpshape{1}{2}{1}{=}2\) -- is
closed and tested. Peeling is the second padding-fighter after axis-flip choice
(Section~\ref{ssec:axis-flip}); the two are complementary -- flip distributes padding
across products, peel removes it from a product whose slice was dead anyway -- and, like
axis-flip, it is \emph{not} a sharing-discovery mechanism (Section~\ref{sec:nonoverlap}).

\paragraph{Limit.} Peeling alone does not close the odd-cubic gaps such as
\(\nmpshape{17}{17}{17}{=}2934\) (FMM-Lille's 2026-06 holding; both catalogs have since
reached \(2930\)): there, no standard Strassen product has its support inside the peeled
corner, so peeling leaves us at \(2940\). FMM-Lille's \(2934\) recipe instead pair-fuses
the diagonal \(\nmpshape{9}{9}{9}\) products and substitutes the corner with
\(\nmpshape{8}{8}{8}\) -- a pair-fusion (Section~\ref{ssec:pair-fusion}) plus
outer-template substitution, not a peel.

\subsection{Pair fusion (Pan trilinear aggregation)}
\label{ssec:pair-fusion}

Pan 1980 \cite{pan1980} observed that two same-shape sub-products arising at distinct
positions in a larger scheme can be computed \emph{jointly} via a trilinear aggregation
taking \(a b c + a b + b c + c a\) bilinear products in place of the naïve \(2 a b c\);
cyclic pair-fusion identifies pairs of outer products with a compatible block-position
pattern and substitutes a fused sub-scheme. This is the first derivation strategy that
\emph{does} introduce sharing across outer products -- but only in a constrained,
pattern-matching way, not as a generic discovery mechanism.

\paragraph{When it pays: a rank-density / \(\omega\) threshold.}
Fusing is worthwhile only against the \emph{best} independent computation, i.e.
\(abc + ab + bc + ca < 2\,R(\nmpshape{a}{b}{c})\) -- not against the naïve \(2abc\).
Writing the leaf's \emph{rank density} \(\rho = R(\nmpshape{a}{b}{c})/(abc)\), this is
\(\rho > \tfrac{1}{2}\big(1 + \tfrac{ab+bc+ca}{abc}\big) \approx \tfrac{1}{2}\): pair-fusion
helps exactly when the leaf is still \emph{``fat''} -- its rank not yet driven below about
half the naïve \(abc\). For a cubic leaf \(\nmpshape{k}{k}{k}=R\) with implied exponent
\(\omega_{\mathrm{leaf}} = \log_k R\) (naïve \(R=k^3\Rightarrow\omega=3\)), the same
condition reads
\[
  \omega_{\mathrm{leaf}} \;>\; \omega^\star := \log_k\!\tfrac{k^3+3k^2}{2}
  \;=\; 3 - \log_k 2 + \log_k\!\big(1+\tfrac{3}{k}\big) \;\approx\; 3 - \log_k 2 .
\]
\emph{Worked examples.} \(\nmpshape{7}{7}{7}=249\) has \(\rho=249/343=0.726\) and
\(\omega_{\mathrm{leaf}}=\log_7 249=2.835\), above its threshold
\(\omega^\star=\log_7 245=2.827\): it \emph{fuses} (pair cost \(490 < 2\cdot 249 = 498\),
saving \(8\)). By contrast \(\nmpshape{8}{9}{9}=430\) has \(\rho=430/648=0.664\), just
below its threshold \(0.673\): \emph{not} fused (pair cost \(873 > 2\cdot 430 = 860\)).

\paragraph{Leaf-level pairing: any two slots of any recombination.}
Pair fusion originally applied only inside dedicated fused-recombination templates. The
device generalises much further, on a simple theoretical ground: the leaves of a
recombination are \emph{formally independent} bilinear problems, and bilinear schemes are
closed under input substitution --- so \emph{any} two same-shape cubic leaves of \emph{any}
recombination may be computed jointly at the pair cost \(k^3+3k^2\), whatever the combining
base. (Our engine initially gated pairing to naive-grid outer schemes, on the mistaken
belief that a non-trivial combining base could not carry a pair; the block-operand wiring
is in fact oblivious to how the slot's inputs are formed.) The search sweeps pairings over
every base of small rank, elects a pairing exactly when
\(k^3+3k^2 < 2\,R(\nmpshape{k}{k}{k})\) at the paired slots, and the lineage records the
base, allocations and slot matching (\code{R*[\,\ldots\,]} nodes). Two shapes this closed
at exact FMM-Lille ties: \(\nmpfield{\Rationals}{20}{23}{23}\), \(5945 \to 5906\) (two
\(\nmpshape{11}{11}{11}\) leaves paired at \(1694 < 2 \cdot 873\)), and
\(\nmpfield{\Rationals}{19}{19}{22}\), \(4538 \to 4536\).

This is why pair-fusion is a \emph{small-base, constant-factor} device: as \(k\to\infty\)
the threshold \(\omega^\star\to 3\) while good leaves have \(\omega_{\mathrm{leaf}}\) far
below \(3\), so the saving vanishes asymptotically; the catalog fuses only the two-product
instance of Pan's genuinely \(\omega\)-improving many-product aggregation.

\paragraph{Why pairs, not triples.} There is \emph{no} ``fuse three \emph{separate}
products'' primitive: the pair identity fuses two cyclically-related products
\(\nmpshape{a}{b}{c}+\nmpshape{b}{c}{a}\) at cost \(abc+ab+bc+ca\), while the
\(\tfrac{1}{3}\) of the full Pan/Islam aggregation comes from one product's own 3-fold
cyclic index symmetry, not from fusing three products. A single \(\nmpshape{n}{n}{n}\)
costs \(\approx n^3/3 + O(n^2)\), and the \(O(n^2)\) term is itself a \emph{best-of}
several Pan-family closed forms: \((n^3+12n^2+11n)/3\) (Islam 2009~\cite{islam2009}
Prop.~1, republished as DIS09~\S3) wins for moderate even \(n\); the smaller
\(\tfrac{15}{4}n^2\) quadratic of the \(\tfrac{45}{4}n^2\) branch (Hadas--Schwartz 1982,
sharpened by Schwartz--Zwecher 2025~\cite{schwartzzwecher2025}) overtakes it for even
\(n \gtrsim 28\); odd \(n\) uses \((n^3+15n^2+14n-6)/3\). The newest family member is LITA
(\emph{Local Improvements to Trilinear Aggregation}, Khoruzhii--Gel{\ss}--Pokutta
2026~\cite{khoruzhii2026lita}): closed forms \(R_e(N) = (4N^3+45N^2+116N+84)/12\) for even
\(N\) and \(R_o(N) = (4N^3+57N^2+14N-15)/12 - \lfloor 3(N-1)/8 \rfloor\) for odd \(N\),
valid for \(N \ge 19\), which in particular repair the historically weak \emph{odd} case
and reproduce several of FMM-Lille's large-cubic index values
(\(\nmpshape{21}{21}{21} = 5198\), \(\nmpshape{23}{23}{23} = 6586\), \ldots). The chooser
takes the \emph{minimum} over all of these branches, so each regime is accounted for, not
just the \(12n^2\) one. Figure~\ref{fig:ta-family-ranks} plots the whole family's
normalised cubic constant \(3R/N^3\) along \(N\): the crossovers show which aggregation is
tightest over which range --- LITA dominates for odd \(N \ge 19\) and large even \(N\),
while the Islam/DIS09 branch still wins at, e.g., \(N = 22\).

\begin{figure}[t]
  \centering
  \begin{tikzpicture}
    \begin{axis}[
        width=\linewidth, height=6.5cm,
        xlabel={cube side $N$}, ylabel={$3\,R(\langle N,N,N\rangle)/N^3$},
        xmin=1, xmax=64,
        legend pos=north east, legend cell align=left,
        grid=both, major grid style={black!10}, minor grid style={black!5},
        title={Trilinear-aggregation methods: normalised cubic constant (lower is better)}]
      \addplot[gray, mark=x, mark size=1.2pt, thick] coordinates {(2,3.7500) (4,2.9063) (6,2.5000) (8,2.2734) (10,2.1300) (12,2.0313) (14,1.9592) (16,1.9043) (18,1.8611) (20,1.8263) (22,1.7975) (24,1.7734) (26,1.7530) (28,1.7353) (30,1.7200) (32,1.7065) (34,1.6946) (36,1.6840) (38,1.6745) (40,1.6659) (42,1.6582) (44,1.6511) (46,1.6446) (48,1.6387) (50,1.6332) (52,1.6281) (54,1.6235) (56,1.6191) (58,1.6150) (60,1.6113) (62,1.6077) (64,1.6044)};
      \addlegendentry{TA\_pan (Pan 1980)}
      \addplot[blue, mark=*, mark size=1.2pt, thick] coordinates {(2,9.7500) (3,7.3333) (4,4.6875) (5,4.5120) (6,3.3056) (7,3.4111) (8,2.6719) (9,2.8313) (10,2.3100) (11,2.4748) (12,2.0764) (13,2.2340) (14,1.9133) (15,2.0604) (16,1.7930) (17,1.9296) (18,1.7006) (19,1.8274) (20,1.6275) (21,1.7454) (22,1.5682) (23,1.6781) (24,1.5191) (25,1.6220) (26,1.4778) (27,1.5745) (28,1.4426) (29,1.5336) (30,1.4122) (31,1.4982) (32,1.3857) (33,1.4672) (34,1.3625) (35,1.4399) (36,1.3418) (37,1.4155) (38,1.3234) (39,1.3937) (40,1.3069) (41,1.3741) (42,1.2920) (43,1.3563) (44,1.2784) (45,1.3402) (46,1.2661) (47,1.3254) (48,1.2548) (49,1.3119) (50,1.2444) (51,1.2995) (52,1.2348) (53,1.2880) (54,1.2260) (55,1.2773) (56,1.2178) (57,1.2674) (58,1.2102) (59,1.2582) (60,1.2031) (61,1.2496) (62,1.1964) (63,1.2416) (64,1.1902)};
      \addlegendentry{TA\_dis (Islam/DIS09)}
      \addplot[teal, mark=triangle*, mark size=1.2pt, thick] coordinates {(2,18.0000) (4,6.2344) (6,3.8889) (8,2.9590) (10,2.4720) (12,2.1753) (14,1.9767) (18,1.7284) (20,1.6459) (22,1.5800) (24,1.5263) (26,1.4816) (28,1.4438) (30,1.4116) (32,1.3836) (34,1.3593) (36,1.3378) (38,1.3187) (40,1.3017) (42,1.2864) (44,1.2725) (46,1.2600) (48,1.2485) (50,1.2380) (52,1.2284) (54,1.2195) (56,1.2113) (58,1.2036) (60,1.1965) (62,1.1899) (64,1.1837)};
      \addlegendentry{TA\_hs (Hadas--Schwartz 1982)}
      \addplot[orange, mark=square*, mark size=1.2pt, thick] coordinates {(2,17.2500) (4,6.0938) (6,3.8333) (8,2.9297) (10,2.4540) (12,2.1632) (14,1.9679) (18,1.7233) (20,1.6418) (22,1.5766) (24,1.5234) (26,1.4792) (28,1.4418) (30,1.4098) (32,1.3821) (34,1.3579) (36,1.3365) (38,1.3176) (40,1.3007) (42,1.2855) (44,1.2717) (46,1.2592) (48,1.2478) (50,1.2374) (52,1.2278) (54,1.2189) (56,1.2108) (58,1.2032) (60,1.1961) (62,1.1895) (64,1.1833)};
      \addlegendentry{TA\_sz (Schwartz--Zwecher 2025)}
      \addplot[red, mark=diamond*, mark size=1.2pt, thick] coordinates {(19,1.7565) (20,1.6376) (21,1.6838) (22,1.5733) (23,1.6239) (24,1.5206) (25,1.5736) (26,1.4768) (27,1.5310) (28,1.4397) (29,1.4942) (30,1.4080) (31,1.4621) (32,1.3805) (33,1.4339) (34,1.3565) (35,1.4091) (36,1.3353) (37,1.3868) (38,1.3165) (39,1.3669) (40,1.2997) (41,1.3489) (42,1.2846) (43,1.3327) (44,1.2709) (45,1.3178) (46,1.2585) (47,1.3042) (48,1.2472) (49,1.2918) (50,1.2368) (51,1.2803) (52,1.2272) (53,1.2697) (54,1.2184) (55,1.2599) (56,1.2103) (57,1.2507) (58,1.2027) (59,1.2422) (60,1.1957) (61,1.2342) (62,1.1891) (63,1.2268) (64,1.1829)};
      \addlegendentry{TA\_lita (KGP 2026)}
    \end{axis}
  \end{tikzpicture}
  \caption{Each trilinear-aggregation method gives an alternative rank for $\langle N,N,N\rangle$ over its own domain (TA\_pan/TA\_hs/TA\_sz are even only, TA\_hs/TA\_sz exclude the $N{=}16$ pole, TA\_lita requires $N\ge 19$). Plotted is the normalised leading constant $3R/N^3$; all converge to $1$, and the crossovers show which method is tightest over which range --- TA\_lita (KGP 2026) dominates for odd $N\ge 19$ and large even $N$, while TA\_dis still wins at e.g.\ $N{=}22$.}
  \label{fig:ta-family-ranks}
\end{figure}
 So the real per-leaf choice is shallow pair fusion
(\(\approx k^3/2\) per product, light \(O(k^2)\) corrections) versus deep single-product TA
(\(\approx k^3/3\) per product, but \(\sim 4\times\) the \(O(k^2)\) corrections). At a
small leaf the corrections dominate and pairs win decisively. For the seven
\(\nmpshape{7}{7}{7}\) leaves of a \(\nmpshape{14}{14}{14}\) (or padded
\(\nmpshape{17}{17}{17}\)) recombination:

\begin{center}
\begin{tabular}{lrr}
\toprule
option & per product & total (7 leaves) \\
\midrule
solo (catalog SOTA) & \(249\) & \(1743\) \\
pair-fused & \(245\) & \(\mathbf{1719}\) \\
full single-product TA & \(390\) & \(2730\) \\
\bottomrule
\end{tabular}
\end{center}

\noindent Pair fusion (\(1719\)) beats both solo (\(1743\)) and full TA (\(2730\)). The
deep TA's \(\tfrac{1}{3}\) only overcomes its heavier corrections at \emph{even}
\(k \ge 18\); odd \(k\) historically carried the heavier \(15k^2\) term, until the LITA
closed forms~\cite{khoruzhii2026lita} repaired the odd case for \(k \ge 19\)
(Figure~\ref{fig:ta-family-ranks}). This is why the catalog pair-fuses small leaves and
reserves full TA for large ones; the chooser picks the cheapest of the three per leaf.

\subsection{Disjoint-sum decomposition ($\tau$-theorem)}
\label{ssec:disjoint-sum}

A target tensor can decompose as a sum of smaller matmul sub-tensors \emph{at the tensor
level} -- not at a single-axis level -- when it admits a $\tau$-theorem-style
decomposition: each leg computes a smaller matmul, and the legs combine via the additive
structure of the tensor. Schwartz--Zwecher 2025 \cite{schwartzzwecher2025} give explicit
constructive disjoint-sum decompositions for \emph{large} formats (e.g.\
\(n = 44, 56, 60\)).

We stress that \emph{we do not currently leverage the $\tau$-theorem anywhere}. The
disjoint-sum schemes we hold (Schwartz--Zwecher and those FMM/Perminov build this way) are
\emph{imported as explicit factor matrices}, not reconstructed from a $\tau$ decomposition:
no catalog entry carries a $\tau$ lineage, and every gap we initially attributed to a
$\tau$ mechanism turned out to close (or not) by other means -- Kronecker, recombination,
or a fortunate cousin scheme. A \code{DisjointSum} lineage node is \emph{reserved} in the
schema but exercised by no scheme today; we have no search operator constructing disjoint
sums, flagged in Section \ref{sec:openquestions} as the largest open extension.

\section{The non-overlap property and a methodological boundary}\label{sec:nonoverlap}

The composition operators of Sections \ref{ssec:axis-flip} through
\ref{ssec:peel} are subject to the following structural invariant.

\begin{theorem}[Non-overlap]\label{thm:nonoverlap}
Let \(O\) be an outer bilinear scheme of shape
\(\nmpshape{n_o}{m_o}{p_o}\) with \(r_o\) products, and let
\(L_1, \ldots, L_{r_o}\) be sub-schemes assigned by block-substitution
(the recombination-with-allocation operator,
Section~\ref{ssec:recombination}) to a target shape
\(\nmpshape{n}{m}{p}\) with allocations \((\mathbf{a}_A, \mathbf{a}_B,
\mathbf{a}_C)\) and an optional output peel pattern. The resulting
algorithm has rank exactly \(\sum_k r_{L_k}\), and no two of its
bilinear products are equal as bilinear forms.
\end{theorem}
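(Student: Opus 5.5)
The plan is to read the statement as a claim about the \emph{materialised} scheme and to prove its two halves separately: a count, and a pairwise-distinctness argument. Write the outer scheme as \(O=\sum_{k=1}^{r_o} u_k\otimes v_k\otimes w_k\), with \(u_k\in K^{n_om_o}\) indexed by \(A\)-blocks, \(v_k\) by \(B\)-blocks and \(w_k\) by \(C\)-blocks. After allocation and zero-sparing (Section~\ref{ssec:decomposition}), slot \(k\) is the bilinear problem \(\hat A_k\cdot\hat B_k\), where \(\hat A_k=\sum_b u_k[b]\,\iota_b(A_b)\) embeds each block into the slot envelope. The output peel of Section~\ref{ssec:peel} only shrinks that envelope.

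\textbf{Step 1 (the count).} By construction, block-substitution replaces slot \(k\) by the \(r_{L_k}\) products of \(L_k\). Each such product is \(\bigl(\sum_{(a,b)}U^{L_k}_{(a,b),j}(\hat A_k)_{a,b}\bigr)\bigl(\sum V^{L_k}_{\cdot,j}\hat B_k\bigr)\), and its output is scattered through \(w_k\). No operator in the list merges or deletes products during assembly, so the scheme has \(\sum_k r_{L_k}\) products. Correctness follows from the block form of the trilinear identity, \(\sum_k w_k\,(\hat A_k\hat B_k)=AB\). I read ``rank exactly'' as this count of bilinear products, which is how Table~\ref{tab:strategies} uses it. The peel needs one extra check: the rows it discards carry only zero \(W\)-support after un-padding. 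Deleting them leaves every surviving \(U,V\) column intact, so the count is unchanged.

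\textbf{Step 2 (distinctness within a slot).} Two products from the same slot \(k\) are \(\alpha_j(\hat A_k)\beta_j(\hat B_k)\) and \(\alpha_{j'}(\hat A_k)\beta_{j'}(\hat B_k)\). The maps \(A\mapsto\hat A_k\) and \(B\mapsto\hat B_k\) are surjective onto the slot's live extents, because \(u_k\neq0\), \(v_k\neq0\) and zero-sparing keeps exactly the genuine overlap. Hence equality of the composite forms pulls back to equality of the leaf forms. So it suffices to assume that each catalog leaf has no two products with proportional \((U_j,V_j)\) pairs. Such a pair would combine into one product with \(W_j+W_{j'}\), contradicting the rank recorded for the leaf. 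I will state this reducedness of atoms as a standing hypothesis, since it holds for every elected scheme by the fusion pass of Section~\ref{ssec:projmargin}.

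\textbf{Step 3 (distinctness across slots), the main obstacle.} Take \(k\neq k'\). A composite \(A\)-form from slot \(k\) has block support exactly \(\{b:u_k[b]\neq0\}\), with block-wise components \(u_k[b]\cdot\alpha_j|_b\). The same holds for \(B\). Equality of the two bilinear products forces proportional \(A\)-forms and proportional \(B\)-forms. I would first try to show this makes \(u_k\propto u_{k'}\) and \(v_k\propto v_{k'}\) by comparing one common nonzero block. Reducedness of \(O\) then rules it out: two outer products with proportional \(u\) and \(v\) would merge into one, and \(O\) is assumed of minimal recorded rank.

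The delicate part is that a padded or peeled block may restrict \(\alpha_j\) to zero on some blocks. The block supports of the composite forms can then be smaller than \(\mathrm{supp}(u_k)\), and the proportionality conclusion weakens. I expect to handle this by proving a lemma: the block ratios \(u_k[b]/u_k[b']\) remain readable from any two blocks on which the restricted \(\alpha_j\) is nonzero.

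Where this fails, the coincidence is a genuine leaf-level accident. The Table~\ref{tab:strategies} footnote already carves this case out, and I would make it an explicit exclusion (or genericity hypothesis) in the theorem rather than try to prove it away.
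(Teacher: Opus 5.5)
Your skeleton matches the paper's own sketch. The count comes from block-substitution, with the peel only deleting dead \(W\)-rows. Distinctness splits into same-slot and cross-slot pairs, which are the paper's clauses~(b) and~(a).

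\textbf{Step 3 (cross-slot).} Your caution is justified: the unconditional claim is false here, and the paper's sketch merely asserts that equal columns must share an outer index. Take Strassen at any split with \(M_1>M_2\), \(P_1>P_2\) (notation of Appendix~\ref{app:worked-products}). Put naive leaves in the slots \((A_{11}+A_{22})(B_{11}+B_{22})\) and \((A_{21}-A_{11})(B_{11}+B_{12})\), both realised at \(\nmpshape{M_1}{N_1}{P_1}\). For \(i\ge M_2\), \(l\ge P_2\) and any \(j\), the leaf product \(\hat A[i,j]\,\hat B[j,l]\) becomes \(A_{11}[i,j]\,B_{11}[j,l]\) in the first slot. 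In the second it becomes \(-A_{11}[i,j]\,B_{11}[j,l]\). This is a mergeable pair, literally equal after flipping the sign of one leaf column, although \(O\) and both leaves are reduced. Only one block is live on each side, so your ratio lemma has nothing to read. Your exclusion is therefore a necessary correction to the statement, not a weakness of your proof. The paper concedes exactly this loophole in footnote~\(\ddagger\) of Table~\ref{tab:strategies} and in the empirical probe of Section~\ref{sec:nonoverlap}.

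\textbf{Step 2 (same-slot): the genuine gap.} You treat this step as safe, but zero-sparing does not keep ``exactly the genuine overlap''. The realised sub-shape is a box (max on output axes, min on contraction, \S\ref{ssec:decomposition}). The top-left-embedded blocks of \(\mathrm{supp}(u_k)\) may cover only a staircase inside it. When the support holds two blocks neither containing the other, \(A\mapsto\hat A_k\) misses a dead corner. This happens within the theorem's own scope. Flipping the \(n\)-axis (\S\ref{ssec:axis-flip}) turns Strassen's \((A_{11}+A_{22})(B_{11}+B_{22})\) into \((A_{21}+A_{12})(B_{11}+B_{22})\). That product is still realised at \(\nmpshape{M_1}{N_1}{P_1}\), but with \(\hat A[i,j]\equiv 0\) for \(i\ge M_2\), \(j\ge N_2\).

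The pull-back to leaf forms is then not injective:
\begin{itemize}
  \item At \((9,8)^3\), the naive leaf's nine products \(\hat A[8,8]\,\hat B[8,l]\) (0-indexed, \(0\le l\le 8\)) all become the zero form. Either they coincide, or the count falls below \(\sum_k r_{L_k}\).
  \item Two leaf products differing only on the dead corner merge. This is the projection-induced fusion of \S\ref{ssec:projmargin}.
\end{itemize}
So leaf reducedness does not give within-slot distinctness, and the paper's clause~(b) has the same blind spot. You need one of two repairs:
\begin{itemize}
  \item a covering hypothesis: each of \(u_k\), \(v_k\) has a support block spanning the realised envelope; or
  \item your exclusion extended to same-slot pairs and dead products.
\end{itemize}

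For unpadded uniform allocations, covering is automatic and your Step~3 closes with no exclusion at all. There the composite forms are rank-one tensors \(u_k\otimes\alpha_j\) and \(v_k\otimes\beta_j\) in block\(\,\otimes\,\)entry coordinates. Hence reducedness of \(O\) and of the leaves suffices.
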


\begin{proof}[Sketch]
The construction is block-substitution: each outer product
\(k \in [1, r_o]\) is instantiated as the sub-scheme \(L_k\) wired
into block-positions of \((A, B, C)\) determined by the allocations
and the outer scheme's bilinear form. Two columns of the resulting
\((U, V, W)\) can be equal as bilinear forms only if (a) they come
from the same outer product index AND (b) they correspond to the
same column of \(L_k\) -- which would require the same column of the
sub-scheme to be wired to two different positions, contradicting the
definition of block-substitution. The peel reduction
(Section \ref{ssec:peel}) removes dead rows of \(W\) but never
identifies two columns. The axis-flip orbit (Section
\ref{ssec:axis-flip}) is an outer transformation that does not change
the rank or the column-distinctness.
\end{proof}

\paragraph{Scope: the operations that \emph{reduce} below the sum.} The
theorem governs pure block-substitution: Kronecker, concatenation,
recombination-with-allocation, axis-flip, peel. Three operations
deliberately produce fewer than \(\sum_k r_{L_k}\) products, and none is a
counterexample, because each applies a \emph{known} reduction at predict
time; nothing is discovered during materialisation.
\begin{itemize}[itemsep=2pt]
  \item \textbf{Pair fusion (Pan TA)}: an optimisation of a fixed
    decomposition's product set through Pan's trilinear-aggregation
    identity (Section~\ref{ssec:pair-fusion}); the saving is a property of
    that identity, counted before materialisation.
  \item \textbf{The serendipitous product}: replaces an enlarged sub-block
    by its catalogued best rank --- a predict-time lookup.
  \item \textbf{Downward projection}: restricts a materialised parent and
    runs the same deterministic dead-code elimination that the
    projection-margin score (Section~\ref{sec:search}) already predicted.
\end{itemize}
So the deeper invariant below --- \emph{predicted rank equals materialised
rank, with no sharing found during materialisation} --- holds for these
three as well, even though the literal \(\sum_k r_{L_k}\) equality of the
theorem does not.

Theorem \ref{thm:nonoverlap} has two practical consequences worth
calling out.

\paragraph{Predicted rank equals materialised rank.} The exhaustive derivation
search (Section \ref{sec:search}) propagates rank predictions
\(\hat r = \sum_k r_{\text{sub-}k}\) computed by summing sub-shape
lookups in the catalog. The materialised scheme has exactly that
rank. There is no hidden discovery during materialisation; the
materialisation step is faithful execution of whatever strategy the
search picked. This is what lets us split the closure loop into a
fast rank-only search phase and a lazy materialise-only phase
without losing accuracy -- see Section \ref{sec:search} for the
algorithm.

\paragraph{A methodological boundary against hand-crafted schemes.}
Almost every hand-crafted small-format result in the literature
exists \emph{because} its author found a non-obvious shared
bilinear core: Strassen's \(\nmpfield{\Reals}{2}{2}{2} = 7\)
beats 8 because two of his seven products do double duty across
output cells; Laderman's \(\nmpfield{\Reals}{3}{3}{3} = 23\)
is the same story at larger scale; Smirnov, Pan, AlphaTensor,
AlphaEvolve all hunt for non-obvious cross-product sharing.
Theorem \ref{thm:nonoverlap} says our \emph{composition} operators do
\emph{not} hunt for such sharing -- the wins come from picking the
right composition of cataloged building blocks. (The reducing operators
of Section~\ref{ssec:decomposition} ff.\ --- serendipity, pair fusion,
projection --- do exploit known reductions, but never discover new ones;
see the scope remark above.)

This makes a ``composition matches hand-crafted rank'' outcome a
\emph{co-existence} claim, not a re-discovery claim. When the search
reached \(\nmpfield{\Rationals}{17}{17}{17} = 2930\) (2026-06) via a
Winograd recombination at \((9,8)^3\), FMM-Lille's then-best for the
same shape stood at \(2934\), by a route structurally distinct from
ours (on inspection, itself a plain recombination --- not the disjoint
sum its description suggested; see Section~\ref{ssec:disjoint-sum});
FMM has since matched \(2930\). Our scheme is a plain sum of
independent sub-products and reproduces none of any hand-crafted
construction's shared bilinear identities. The catalog records
distinct routes with distinct lineages and distinct
\code{attribution\_for\_rank}.

The two approaches are complementary, not redundant. Composition
expands the SOTA frontier without finding new bilinear identities;
hand-crafted constructions find new bilinear identities that can be
ingested as new outer schemes, immediately expanding what
composition can do.

\paragraph{The boundary, measured.} The 2026-07 cross-catalog
campaign (Section~\ref{sec:tables}) turned this methodological line
into an empirical one, in both directions.

\emph{Our schemes contain no accidental sharing.} The theorem leaves
one loophole open in principle: two \emph{independently built} leaves
of a recombination could, by coincidence, feed identical bilinear
forms over the target's variables --- a coinciding \(u \otimes v\) pair
would merge into one product (the \(W\)-columns add) for a free rank
\(-1\). We probed our largest contested scheme
(\(\nmpfield{\Rationals}{17}{17}{19} = 3267\), 3267 products) for such
pairs: \emph{zero} coinciding \(u \otimes v\) directions. The
invariant holds not only by construction within each substitution but
empirically across independent leaves.

\emph{What separates us from hand-crafted catalogs is exactly the
sharing the theorem forgoes.} At campaign end the catalog led or tied
FMM-Lille everywhere except six shapes (twelve ranks in total, out of
thousands of compared shapes). Every one of the six deficits traces to
the same device in the external artifact: a \emph{fusion kernel} ---
a set of products shared \emph{across} recursion slots. The simplest
worked specimen, from FMM's \(\nmpshape{8}{27}{30}\) artifact
(rank 3736, which we decoded as
\(\nmpshape{4}{9}{10}{:}250 \otimes \nmpshape{2}{3}{3}{:}15\) with a
modified kernel): the plain Kronecker product computes 250 independent
``slots'', each an instance of \(\nmpshape{2}{3}{3}\) costing 15
products. In the artifact, 229 slots are plain --- but five
\emph{pairs} of slots \((s,t)\) are computed jointly:
\[
  \underbrace{8}_{\text{plain products of } s}
  \;+\; \underbrace{8}_{\text{plain products of } t}
  \;+\; \underbrace{13}_{\text{mixed products serving both}}
  \;=\; 29
  \qquad\text{versus}\qquad
  15 + 15 = 30 \text{ independently.}
\]
Each mixed product's \(U\)-column combines contributions of both
slots (its column, reshaped as a base\(\times\)inner matrix, has rank
2 instead of the rank 1 of every plain Kronecker product --- this
``Kron-rank fingerprint'' is also how the device is detected, since
nothing at the lineage or rank-arithmetic level reveals it: the
sharing is visible only after expanding the scheme to explicit
\((U,V,W)\) and rank-testing each product). One rank saved per pair,
five pairs, plus related bookkeeping: rank 3736 where plain
composition gives 3750. This is Theorem~\ref{thm:nonoverlap}'s
boundary made concrete --- the external catalog's residual edge over
our composition closure is \emph{only} this cross-slot sharing device,
and conversely no engine of Section~\ref{sec:strategies} can reproduce
it without a new, dedicated constructor
(Section~\ref{sec:openquestions}).

\paragraph{The invariant is current-state, not eternal.}
Theorem \ref{thm:nonoverlap} depends on the materialiser being pure
block-substitution: post-construction column fusion, random sign-salting
of sub-blocks, or coefficient mixing between outer products sharing a
sub-shape would each let materialised rank drop \emph{below} predicted
rank --- a feature, but one that breaks the invariant the two-phase
search of Section \ref{sec:search} relies on, so any such ``salt'' must
be confined to the final materialisation step or come with re-derived
predictions. We decided against salt for now: a pure search-time
prediction is easier to reason about and to cache, and avoids subtle
correctness bugs at the search/materialise boundary; sharing-style
discoveries are deferred to pair fusion (Section \ref{ssec:pair-fusion}),
and the disjoint-sum / $\tau$ route (Section \ref{ssec:disjoint-sum})
remains future work.

\section{Recombination multisets of a base}\label{sec:multisets}

Recombining a base at a block decomposition
(Section~\ref{ssec:recombination}) sends each of its \(r\) products to a
smaller multiplication; the resulting \emph{multiset} of \(r\) sub-shapes
is, for plain additive recombination, a complete rank invariant: the
recombined rank is \(\sum_k r_{\text{sub-}k}\). The multiset depends only
on the support pattern of \((U,V,W)\), not on coefficients, and --- with
each axis's blocks ordered descending --- is \emph{symbolic} in the block
sizes \((n_i,m_j,p_k)\); the split \((9,8)^3\to\nmpshape{17}{17}{17}\) used
below is one instantiation. This section enumerates, by derivation rather
than search, which multisets a fixed base can ever produce.

\paragraph{Relation to Sedoglavic's Proposition~1.}
The enumeration recovers, and strictly generalises, the closed form of
Sedoglavic~\cite{sedoglavic2017}, whose Proposition~1,
\[
  \nmpshape{u{+}v}{u{+}v}{u{+}v}\ \le\
  \nmpshape{u}{u}{u}\;+\;3\,\nmpshape{u}{u}{v}\;+\;3\,\nmpshape{v}{v}{u}
  \qquad(u>v),
\]
is exactly the recombination multiset of the Strassen \nmpshape{2}{2}{2}
base at the two-part allocation \([u,v]\) on every axis (one cube plus two
cyclic triples); the headline cases
\((u,v)=(4,3)\to\nmpshape{7}{7}{7}{=}250\) and
\((9,8)\to\nmpshape{17}{17}{17}{=}2940\) are two of its allocations. The
enumeration subsumes the formula in three directions: every base, every
allocation (unbalanced and \(\ge3\)-part splits included), and --- via the
dominance frontier of Section~\ref{ssec:frontier} --- the optimal multiset
rather than one symbolic recipe. The excluded case \(u=v\) is recovered as
Pan trilinear aggregation~\cite{pan1978}, kept only when cheaper; and the
generalisation can beat the proposition at its own shapes: for
\nmpshape{17}{17}{17} an allocation in the Strassen orbit dominates the
\([9,8]\) split, \(2930<2940\). The closed form is thus not a separate
construction we wire in, but a single base-and-allocation point of the
multiset frontier search.

\subsection{The orbit and a decoupling lemma}

By de Groote~\cite{degroote1978} the rank-7 decompositions of \nmpshape{2}{2}{2}
form a \emph{single} \(\mathrm{GL}_2(\Rationals)^3\) orbit (its discrete
isotropy is analysed in~\cite{burichenko2014}). Every scheme is therefore a
change of basis of Strassen's, acting on the factors by
\(U'_k = X^{\!\top}U_kY^{\!\top}\),
\(V'_k = Y^{-\top}V_kZ^{\!\top}\),
\(W'_k = X^{-1}W_kZ^{-1}\)
for \((X,Y,Z)\in\mathrm{GL}_2\times\mathrm{GL}_2\times\mathrm{GL}_2\). A direct
brute-force sweep of ternary schemes is hopeless (\(\approx 39\text{k}\)
candidate rank-one terms per product, \(39\text{k}^7\) paths). The structure that
rescues it:

\begin{lemma}[Per-axis decoupling]\label{lem:decouple}
The recombination sub-dimension on a given axis depends on \emph{only one} of
\(X,Y,Z\), and only through that matrix's column/row directions. Concretely, on
the first axis,
\[
  \mathrm{sub}A_k = \mathrm{big} \iff
  d_0^{\!\top}U_k\neq 0 \ \wedge\ \mathrm{perp}(d_1)^{\!\top}W_k\neq 0,
  \qquad X=[\,d_0\mid d_1\,],
\]
and analogously for the second and third axes via \(Y\) and \(Z\) --- with the
caveat that, by the transpose placement of the action, \(Y\) and \(Z\) act
through their \emph{row} directions (and the dual tests use adjugate
\emph{columns}), so the displayed column formula must not be transplanted
verbatim.
\end{lemma}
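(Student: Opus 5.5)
The plan is a direct computation from the explicit form of the \(\mathrm{GL}_2^3\) action. First I fix what ``\(\mathrm{sub}A_k=\mathrm{big}\)'' means, using the zero-sparing rule of Section~\ref{ssec:decomposition} together with the peel convention of Section~\ref{ssec:peel}. On the first axis (the \(n\)-axis), only \(U'_k\) (rows of \(A\)) and \(W'_k\) (rows of \(C\)) carry a block index. With the two-part allocation ordered descending, the realised extent of product \(k\) on this axis is the elementwise minimum of two extents: the input-side extent (the large part iff \(U'_k\) has a nonzero entry in block-row \(0\)) and the output-side extent (the large part iff \(W'_k\) has a nonzero entry in block-row \(0\)). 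Hence \(\mathrm{sub}A_k=\mathrm{big}\) iff row \(0\) of \(U'_k\) \emph{and} row \(0\) of \(W'_k\) are both nonzero. Everything else is linear algebra on \(2\times2\) matrices.

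Second, I compute those two rows under the action. Row \(0\) of \(U'_k=X^{\top}U_kY^{\top}\) is \(e_0^{\top}X^{\top}U_kY^{\top}=d_0^{\top}U_kY^{\top}\). Since \(Y^{\top}\) is invertible, this row is nonzero iff \(d_0^{\top}U_k\neq0\), so \(Y\) drops out. Row \(0\) of \(W'_k=X^{-1}W_kZ^{-1}\) is \(r_0^{\top}W_kZ^{-1}\), where \(r_0^{\top}\) is the first row of \(X^{-1}\). Invertibility of \(Z\) removes \(Z\). From \(X^{-1}X=I\) we get \(r_0^{\top}d_1=0\), so \(r_0\) is a nonzero multiple of \(\mathrm{perp}(d_1)\); explicitly it is \(\det(X)^{-1}\) times the first row of the adjugate. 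Since nonvanishing is scale-invariant, this yields exactly the displayed criterion. The criterion depends only on the lines spanned by \(d_0\) and \(d_1\), which justifies ``only through that matrix's column directions''.

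Third, I repeat the computation for the other two axes. This is where the caveat in the statement originates, so I will track indices carefully.
\begin{itemize}
  \item \textbf{\(m\)-axis.} It reads column blocks of \(U'_k\) and row blocks of \(V'_k\). Column \(0\) of \(X^{\top}U_kY^{\top}\) equals \(X^{\top}U_k\,(\text{row }0\text{ of }Y)^{\top}\). Because \(X^{\top}\) is invertible, only the \emph{row} direction of \(Y\) survives. Row \(0\) of \(Y^{-\top}V_kZ^{\top}\) is governed by column \(0\) of \(Y^{-1}\), that is, an adjugate column of \(Y\) up to scale; invertibility of \(Z^{\top}\) removes \(Z\).
  \item \textbf{\(p\)-axis.} Columns of \(V'_k\) involve rows of \(Z\), and columns of \(W'_k\) involve columns of \(Z^{-1}\), hence adjugate columns of \(Z\).
\end{itemize}
In each case the other two group elements appear only as invertible left or right factors, and such factors never change whether a row or column is zero. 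This proves the decoupling.

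I expect the main obstacle to be the first step rather than the algebra. The lemma is only as sharp as the definition of the sub-dimension, and I must justify that ``big'' is the \emph{conjunction} of the input-side and output-side tests (the min of the two extents). The alternative reading, the max rule for a product that ``sums across'' the axis, would turn the criterion into a disjunction. I would settle this by checking the convention against the worked \(\nmpshape{3}{3}{3}\) multiset and the \(\gamma_5\) corner collapse. I would also confirm that a product touching only block \(1\) on one side is genuinely realised at the small part, and not merely bounded by it.
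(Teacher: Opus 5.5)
Your proposal is correct and follows the paper's own argument: the group elements of the other two axes enter only as invertible left or right factors, which can never zero out a row or column. Your row-$0$ computation, with the first row of $X^{-1}$ being an adjugate row proportional to $\mathrm{perp}(d_1)$, supplies the explicit formula that the paper's one-line proof leaves to the reader.

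One caution on your planned sanity check. The worked $\langle 3,3,3\rangle$ multiset and the appendix tables charge output axes by input-side support only. For example, Strassen's $M_6$, which writes only $C_{22}$, is charged $\langle M_1,N_1,P_1\rangle$ there. So that check would contradict your conjunction, whereas the $\gamma_5$ collapse of the same product to $\langle 1,2,1\rangle$ confirms it. The lemma's displayed formula itself fixes the peel-aware elementwise-minimum reading you adopted, so take that as the definition. (Incidentally, the alternative reading is input-side-only, not a disjunction.)
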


The lemma holds because an invertible factor on the \emph{other} two axes kills
no rows or columns of a factor matrix, so the per-axis support is unaffected by
it. Consequently the three axes are independent and the set of realisable
multisets is the product of three per-axis pattern sets, zipped by product index.

\subsection{Exact enumeration}

Each per-axis pattern is piecewise-constant in the direction(s) and changes only
when a direction meets a null-space of one of the fixed (integer) base factors.
The critical directions are therefore finite and rational, so a finite
integer-direction sweep enumerates \emph{every} realisable pattern --- exactly,
with integer arithmetic (adjugates in place of inverses, so zero/non-zero tests
are exact). For a two-part axis this is provably complete: the factor
null-spaces are the only critical directions, and one generic direction covers
the open stratum. For \(\ge 3\)-part axes completeness is certified only
\emph{empirically}, by direction-bound stability --- a bound, not a proof. We
emphasise that this is a \emph{derivation}, not a search over schemes.

\paragraph{Result for \nmpshape{2}{2}{2}.} Over the full
\(\mathrm{GL}_2(\Rationals)^3\) orbit there are exactly \textbf{40} distinct
canonical recombination multisets (canonical = quotiented by the base's
\(S_3\) axis-symmetry). The same set covers every rank-7 scheme over any
characteristic-0 field: the per-axis strata are cut by \emph{rational}
hyperplanes (the null-spaces of the integer Strassen factors), so every
stratum nonempty over \(\Reals\) or \(\Complex\) already contains a rational
point. The count is stable across direction bounds and identical whether
seeded from Strassen or Winograd --- an independent check of de~Groote's
single-orbit theorem. Of the 40, \textbf{6} are realised by ternary
\(\{-1,0,1\}\) schemes: those of Strassen, Winograd,
AlphaTensor-\(\Integers\), and Perminov, plus two we register as bases so
the search can use them --- one matching AlphaTensor-\(\Ftwo\)'s support,
and a novel base whose multiset is
\(3\cdot\nmpshape{n_1}{m_1}{p_1}+\nmpshape{n_1}{m_2}{p_2}+\nmpshape{n_2}{m_1}{p_2}+\nmpshape{n_2}{m_2}{p_1}+\nmpshape{n_2}{m_2}{p_2}\),
of independent interest because its triple cubic group admits 3-way
trilinear aggregation~\cite{pan1978}, a cost lever Strassen's own multiset
does not expose. The ternary count is a lower bound (``at least 6''):
stable under widening the change-of-basis alphabet and the choice of seed,
but not certified exhaustive over all ternary schemes. It coincides with
the size of the canonical frontier below --- suggestive, not proven
set-identical.

\subsection{Dominance pruning: the frontier}\label{ssec:frontier}

Most canonical multisets can never be rank-optimal and need never be considered.
The recombined rank is \(\sum_k r_{\text{sub-}k}\), and matrix-multiplication rank
is \emph{monotone} --- \(\operatorname{rank}\nmpshape{a}{b}{c}\le\operatorname{rank}\nmpshape{a'}{b'}{c'}\)
whenever \((a,b,c)\le(a',b',c')\) componentwise (pad the smaller product into the
larger). The cost is thus a sum over products of a monotone term, so sub-shape
dominance is cost-monotone: if every product of multiset \(A\) can be matched to a
distinct product of \(B\) with \(A\)'s sub-shape componentwise \(\le\) its partner,
then \(\mathrm{cost}(A)\le\mathrm{cost}(B)\) for \emph{every} base recursion and at
\emph{every} allocation (block index \(0\) = largest block under any split). Only the
\emph{dominance frontier} --- the non-dominated antichain --- can be optimal; pruning
to it is lossless for the minimum and allocation/base-independent. This is the exact
formalisation of the folklore ``\(1\cdot\mathrm{BBB}+2\cdot\mathrm{SSS}\) beats
\(2\cdot\mathrm{BBB}+1\cdot\mathrm{SSS}\)'': the second is dominated.

Two quotient levels matter, and they differ. The \emph{canonical} frontier
quotients by the base's shape automorphisms (dominance is then
axis-permutation aware, since rank is); it is exact when the allocation is symmetric
or the search also permutes axis assignments. The finer \emph{axis-tagged} frontier
applies no quotient and is lossless even for asymmetric allocations, where axis
identity is real. Both are computed from the canonical \emph{keys}, not from any
seeded representative, so both are seed-independent.

\begin{center}
\begin{tabular}{lrrrr}
\toprule
 & \multicolumn{2}{c}{canonical (mod axis sym.)} & \multicolumn{2}{c}{axis-tagged} \\
\cmidrule(lr){2-3}\cmidrule(lr){4-5}
base & multisets & frontier & multisets & frontier \\
\midrule
\nmpshape{2}{2}{2} & 40     & \textbf{6}   & 144     & \textbf{12} \\
\nmpshape{2}{2}{3} & 12\,217 & \textbf{21}  & 23\,719 & \textbf{33} \\
\nmpshape{2}{3}{3} & 62\,487 & \textbf{170} & 124\,250 & \textbf{310} \\
\bottomrule
\end{tabular}
\end{center}

Two observations. First, the frontier is far more robust than the raw
enumeration: for \nmpshape{2}{2}{3} the canonical count still grows with the
direction bound (\(8\,425\to 12\,217\)) while the frontier holds at \(21\) across
bounds --- so the pruned object is stable even where completeness of the full set is
only an empirical bound. Second, the frontier cannot collapse to a single point: a
product is one corner-type across all three axes at once, so the axes are coupled and
several incomparable trade-offs survive (this is exactly why
\nmpshape{17}{17}{17} admits the spread of ranks \(2930,\dots,2958\) from one base).

Operationally this makes selection cheap. For a fixed allocation the cost is
\emph{linear in the corner-type count-vector},
\(\mathrm{cost}=\sum_{\text{type}}\mathrm{count}[\text{type}]\cdot
\operatorname{rank}(\text{type})\); the frontier is allocation-independent,
so it is built once and each allocation is scored by one
count-vector\,\(\times\)\,rank product over the \(\le\) few-hundred frontier
members. The frontier thus collapses the multiset axis of the search to a
small fixed table, leaving the allocation sweep as the only combinatorially
large dimension.

\subsection{Generality}

The construction is base-agnostic: the same orbit enumeration
(\code{RecombinationMultisetOrbit}) applies to any base \nmpshape{n}{m}{p}
over \(\mathrm{GL}_n\times\mathrm{GL}_m\times\mathrm{GL}_p\), canonicalised
by the base's shape automorphisms (the full \(S_3\) for a cube, one swap
for \nmpshape{2}{3}{3}, trivial for an all-distinct shape). It thus
answers, for any candidate outer base, \emph{which} sub-shape multisets ---
hence which trilinear-aggregation and padding-avoidance opportunities ---
that base can ever expose to the frontier search.

\section{A constructive realisation of the Hopcroft--Kerr bound}\label{sec:hk71}

Hopcroft and Kerr~\cite{hopcroftkerr1971} proved
\(R(\nmpshape{2}{p}{n}) \le \lceil (3pn + \max(p,n))/2 \rceil\)
over any field, by an explicit construction (their Theorem~1). The published
proof, however, compresses its hardest steps. Explicit in the 1971 paper:
the three diagonal methods; Lemma~2's three \emph{different}-method pair
cases; \emph{one} same-method pair case, \((1,1,\mathrm{bridge}\text{-}2)\);
the non-constructive existence proofs of Lemmas~1 and~3; and the Case-2
outline with the \(Z\)-aggregation identities. Not in the paper: the
remaining \emph{five} same-method cases (``the other cases follow by
symmetry'' --- they do not: the natural involutions swap methods 1 and 2
but fix method~3); any concrete augmentation matrix; and the fact that
Lemma~3's own sequence forces \((3,3)\)-same-method pairs, a case for which
no derivation is given and --- as we show below --- none exists within the
natural product family.

For five decades the bound has effectively been a black box. The community
catalogs (FMM-Lille~\cite{fmmlille}, Perminov~\cite{perminov}) credit
Hopcroft--Kerr for many \(\nmpshape{2}{p}{n}\) entries, yet at several
shapes — as of our 2026-06 full synchronisation of both catalogs —
\emph{neither holds a scheme attaining the formula, nor any reference to
its value as an attainable bound}: at \(\nmpshape{2}{10}{15}\) both publish
234 against the formula's 233; at \(\nmpshape{2}{12}{16}\) the best
published is 298 against 296; at \(\nmpshape{2}{10}{16}\) (formula 248) no
attaining scheme is published at all. If a working constructive reading of
the 1971 paper existed anywhere, these entries would attain the formula.
This section gives a complete, machine-verified constructive realisation,
identifies precisely which ingredient resisted (and why), and produces
schemes strictly below every published catalog at exactly those shapes.

\subsection{The construction, in blocks}

Write \(Y = \bar A X\) with \(A \in K^{p\times 2}\) augmented to
\(\bar A = MA \in K^{n\times 2}\), \(X \in K^{2\times n}\), and
\(p \le n \le 2p-1\). The constructive pipeline decomposes into seven blocks.
Blocks (i)--(iii) are Hopcroft--Kerr's (with (iii) only partially explicit in
the paper); blocks (iv)--(vii) are this work's, and are what turn the 1971
existence argument into an algorithm.

\paragraph{(i) Diagonal methods.} Each internal row \(i\) carries a method
\(c(i)\in\{1,2,3\}\) computing \(y_{ii}\) with two reusable products:
method 1 uses \(\{A{=}a_2(x_1{+}x_2),\,B{=}(a_1{-}a_2)x_1\}\), method 2
\(\{C{=}(a_1{-}a_2)x_2,\,D{=}a_1(x_1{+}x_2)\}\), method 3
\(\{E{=}a_2x_2,\,F{=}a_1x_1\}\).

\paragraph{(ii) Different-method pairs (HK Lemma 2).} A pair
\((y_{ij}, y_{ji})\) with \(c(i)\ne c(j)\) costs three new products. A
structural fact we use heavily: the \((\alpha,\beta)\)-method pair emits the
\emph{third} method's two products on a signed combination of the rows
(\((1,2)\to E,F\) on \(a_i{+}a_j\); \((1,3)\to C,D\) and \((2,3)\to A,B\) on
\(a_j{-}a_i\)) — each pair manufactures a \emph{virtual diagonal} for free.

\paragraph{(iii) Same-method pairs via a bridge.} When \(c(i)=c(j)\), three
new products suffice by routing through a third row \(b\), consuming the
diagonal and pair products of \((i,b)\) and \((b,j)\). Which bridge methods
are derivable is the crux below.

\paragraph{(iv) Bridge selection.} In cyclic distance-ordered processing,
\emph{any} arc-interior position \(b = i+e \ (1 \le e < d)\) is a legal
bridge: both required pairs are already computed. With an alternating
\(1,2\)-coloring an opposite-method interior position essentially always
exists, so the benign cases \((1,1,\mathrm{b}2)\), \((2,2,\mathrm{b}1)\)
carry almost all pairs.

\paragraph{(v) Unimodular Lemma-1 matrices: integer schemes.}
\(M\) needs every cyclic \(p\)-window nonsingular; we strengthen this to
\emph{unimodular} (every window determinant \(\pm 1\)), which makes the
back-substituted \(W\) — hence the whole scheme — \emph{integer}, over
\(\Integers\) rather than \(\Rationals\) (the classical Vandermonde rows
are numerically unusable here, and dense \(\pm1\) rows can \emph{never} be
unimodular for \(n-p \ge 2\), every \(2\times2\) \(\pm1\)-minor being
even). Window determinants of \([I_p; B]\) reduce to three minor families
of the augmented block \(B\) (\(m = n-p\) rows): leading, trailing, and
sliding \(m\)-column minors. A \emph{Euclidean construction} satisfies all
three with pure \(0/1\) entries and no search: a period-\(m\) comb body
whose sliding windows are permutation matrices, plus — for
\(r = p \bmod m > 0\) — a tail of \(r\) columns equal to
\(B(m, r)^{\!\top}\). Eliminating the comb's permutation part shows the
crossing and trailing minors of \(B\) are exactly the same three families
of the tail transposed, so the recursion descends like the gcd and
terminates at the pure comb. Every window is still certified by
fraction-free BigInteger elimination, and the back-substitution
\(A x_j = M_{\mathrm{win}(j)}^{-1}(\bar A X)_{\mathrm{win}(j),j}\) is exact
integer arithmetic.

\paragraph{(vi) Case 2: circulant matching and the repaired Step 3.} For even
\(p = 2k+2\), the band of half-width \(k\) leaves each column one cell short;
the budget — \(\lceil(3pn{+}n)/2\rceil - n(3k{+}2)\) — is exactly \(n/2\)
full pairs at distance \(k{+}1\): a perfect matching in the circulant
\(i \leftrightarrow i{+}k{+}1 \pmod n\), whose \(\gcd(n,k{+}1)\) orbit
cycles are matched edge-alternately. Odd cycles each leave one column,
completed in pairs by Hopcroft--Kerr's \emph{Z-trick} (their pp.~12--13),
which we restate operationally: for leftover columns \(i_2, i_4\) at
separation \(\delta \le k\), set \(i_1 = i_2{+}k{+}1\), \(i_3 = i_4{-}k{-}1\)
and treat \(\alpha = a_{i_2}{+}a_{i_3}\), \(\beta = a_{i_1}{-}a_{i_4}\) as
virtual rows: their method products already exist as the band pairs'
cross-products (by the structural fact in (ii)), so the virtual Lemma-2 cross
pair \(Z(\beta,\alpha_x), Z(\alpha,\beta_x)\) costs three new products and
yields both missing cells after subtracting cells of \emph{complete} columns
— including out-of-band cells, recovered exactly as rational combinations
through the Lemma-1 window relation. One leftover (n odd) is absorbed by the
formula's ceiling.

\paragraph{(vii) Chained augmentation for \(n > 2p{-}1\).} One Lemma-1 band
covers at most \(2p{-}1\) columns, so for larger \(n\) the columns are
partitioned into segments of size \(s \in [p, 2p{-}1]\), each built by the
band construction and concatenated along the column axis (the products of
distinct segments share nothing; ranks add). Because \(\max(p,s) = s\)
throughout the segment range, per-segment formula values telescope to the
global \(\lceil n(3p{+}1)/2 \rceil\) whenever the ceiling slack vanishes:
automatic for odd \(p\) (each segment cost \(s(3p{+}1)/2\) is an integer),
and for even \(p\) achievable by using at most one odd-size segment, of
parity matching \(n\). Rather than hand-coding these rules, the partition is
chosen by a dynamic program over the \emph{achieved} ranks of the segment
builds, which also routes around the degraded \(g \ge 6\) segment sizes of
block (vi) — e.g.\ \(p{=}12, n{=}36\) splits as \(16{+}20\) (both exact), not
\(18{+}18\) (both \(+1\)). The chain attains the formula at \emph{every}
swept \(n > 2p{-}1\), all parities.

\subsection{The bridge-3 cases: a published impossibility, and its
operational reversal}

The repository's prior analysis proved (Gr\"obner-verified): \emph{no three
rank-1 products complete the \((2,2,\mathrm{bridge}\text{-}3)\) case over
characteristic 0 when the reusable set is}
\(S = \{C_i, D_i, C_j, D_j, E_b, F_b\}\). That theorem is correct — and
turns out not to bind. The emission actually disposes of \emph{twelve}
reusables: the three diagonals \emph{and all products of the Lemma-2 pairs
\((i,b)\) and \((b,j)\)}, each independently weightable in the output
combination. Over this true set, exact search (120 candidate atoms, rational
arithmetic) finds three-product completions; solving for the weights gives
the explicit identities
\begin{align*}
(2,2,\mathrm{b}3):\quad
y_{ij} &= -C_i + D_i + E_b - A(\delta_{ib}) + G_{ib} + A(\delta_{bj})
          - E(\sigma) - G^\ast,\\
y_{ji} &= -C_i + D_i + F_b - B(\delta_{ib}) - G_{ib} + B(\delta_{bj})
          - F(\sigma) + G^\ast,
\end{align*}
with \(\delta_{ib} = a_b - a_i\), \(\delta_{bj} = a_b - a_j\),
\(\sigma = a_i + a_b - a_j\) (mirrored on the \(x\)-side), and the single
genuinely new mixed product
\(G^\ast = (a_{i,1} + a_{b,2} - a_{j,2})\,(x_{1,b} - x_{1,j} - x_{2,i})\);
the three \emph{new} multiplications are \(E(\sigma), F(\sigma), G^\ast\).
An analogous identity closes \((1,1,\mathrm{b}3)\). This vindicates
Hopcroft--Kerr's ``three additional multiplications'' claim for the bridge-3
cases: the published impossibility holds for its narrow \(S\), but the
construction was never confined to it.

\subsection{The genuine gap, and a geometric limit}

The same exact search over the true reusable set finds \emph{no}
three-product completion for \((3,3,\mathrm{bridge}\text{-}1)\) or
\((3,3,\mathrm{bridge}\text{-}2)\) — and here the negative can be promoted
from a catalog falsification to a structural theorem. Block-decompose the
monomial space by rows \(\{a_i,a_b\}\) versus \(a_j\) and columns
\(\{x_i,x_b\}\) versus \(x_j\): the two bridge-sum targets sit, with rank 2,
in blocks no reusable product touches, which forces \emph{any} three rank-1
completing atoms (arbitrary linear forms on rows and columns \(\{i,b,j\}\) —
no candidate catalog) to be supported on all four blocks, pins their
coefficient supports, and collapses their spill into the
\((\{a_i,a_b\},\{x_i,x_b\})\) block to two nonzero dyads with factors in the
virtual-row space \(\langle a_i{+}a_b\rangle\) and virtual-column space
\(\langle x_i{+}x_b\rangle\), each required to lie in the reusable span. An
exact computation shows that intersection is one-dimensional, generated by
the \emph{rank-2} virtual diagonal
\((a_{i1}{+}a_{b1})\otimes(x_{1i}{+}x_{1b}) +
(a_{i2}{+}a_{b2})\otimes(x_{2i}{+}x_{2b})\) — it contains no rank-1 element,
so no three-product completion exists, for either bridge method
(scope: the emitter's nine-product reusable set and atoms local to
\(\{i,b,j\}\)). This is the robust obstruction:
Hopcroft--Kerr's own Lemma-3 method sequence places method-3 rows at
spacing~2, forcing \((3,3)\) same-method band pairs — so \emph{the paper's
Theorem-1 proof, as written, rests on an underivable case}. Our construction
avoids it by keeping method-3 rows pairwise more than \(k\) apart, which is
possible except at a sharply characterised family: for \(n = 3(k{+}1)\) the
circulant decomposes into \(k{+}1\) triangles, demanding
\(\lfloor (k{+}1)/2 \rfloor\) Z-pairs; three method-3 rows pairwise
\({>}k\) apart on \(C_{3(k+1)}\) would need their arcs to sum to at least
\(3k{+}6 > n\) — impossible. Those shapes (\(g \ge 6\) odd cycles; six in
the full sweep range, \(\nmpshape{2}{12}{18}\) through
\(\nmpshape{2}{24}{30}\)) land at formula\(+1\) to \(+3\); everything else
closes exactly.

\subsection{A worked obstruction: the gcd accounting at
\(\nmpshape{2}{12}{18}\)}
\label{ssec:hk71-gcd}

The limit just stated is, at bottom, a counting argument, and it is worth
making explicit: it is the only place the construction misses, and it
localises the miss to a single arithmetic clash between two demands on the
same parameter \(k{+}1\).

\paragraph{The ledger has no slack.} For even \(p = 2k+2\) the band of
half-width \(k\) costs \(2n\) diagonal products plus \(3kn\) for the pairs at
distances \(1..k\), i.e.\ \(n(3k{+}2)\), and supplies \(2k{+}1 = p{-}1\) band
cells per column. The formula then allows exactly
\(\lceil (3pn{+}n)/2 \rceil - n(3k{+}2) = 3n/2\) further products for the
\(n\) remaining cells --- precisely \(n/2\) pairs at three products each,
with no residue. Everything below is therefore a statement about whether a
certain pairing exists; a column whose last cell is instead computed
naively costs two products against its \(3/2\) share, so \emph{each pair of
columns that fails to be paired costs exactly \(+1\)}.

\paragraph{Demand: a gcd.} The distance-\((k{+}1)\) pair \((i, i{+}k{+}1)\)
supplies the missing cell of \emph{both} its columns, so the required pairing
is a perfect matching in the circulant \(i \leftrightarrow i{+}k{+}1
\pmod n\). That graph is a disjoint union of
\[
  g \;=\; \gcd(n,\, k{+}1) \quad\text{cycles of length } n/g .
\]
Even cycles match perfectly. If \(n/g\) is odd, every cycle strands one
vertex, leaving \(g\) \emph{leftover columns}; these are completed two at a
time by the \(Z\)-trick of block (vi), so the construction needs
\(\lfloor g/2 \rfloor\) \(Z\)-pairs (a single odd leftover forces one naive
cell, but \(n/g\) and \(g\) both odd makes \(n\) odd, and the formula's
ceiling absorbs it).

\paragraph{Supply: an arc-sum.} Each \(Z\)-pair introduces one method-3 row.
By the \((3,3,\mathrm{bridge})\) theorem above, two method-3 rows may not sit
within band distance of each other --- their arc must be at least
\(k{+}2\) --- or the band contains a \((3,3)\) same-method pair with no
three-product completion. On the cycle \(C_n\) the arcs sum to \(n\), so at
most
\[
  \left\lfloor \frac{n}{k+2} \right\rfloor \quad\text{method-3 rows fit.}
\]

\paragraph{The clash.} Composing the two counts, the construction lands at
\[
  \text{formula} \;+\; \max\!\left(0,\;
    \left\lfloor \frac{g}{2} \right\rfloor -
    \left\lfloor \frac{n}{k+2} \right\rfloor \right),
  \qquad g = \gcd(n,\,k{+}1),
\]
whenever \(n/g\) is odd (and at the formula otherwise). This is not an
independent result --- it is the composition of the \(+1\)-per-dropped-pair
ledger with the arc-sum bound --- but it reproduces every observed excess in
the sweep, and it exposes why the failure is structural rather than
incidental: \(k{+}1\) simultaneously \emph{creates} the demand (through
\(g = \gcd(n, k{+}1)\)) and, via the band width, \emph{throttles} the supply
(through the spacing \(k{+}2\)). The two move together, in opposite
directions. The worst case is \(n = 3(k{+}1)\), where the circulant is all
triangles: demand \(\lfloor (k{+}1)/2 \rfloor\) grows with \(k\) while supply
stays at \(\lfloor 3(k{+}1)/(k{+}2) \rfloor = 2\) for every \(k \ge 2\).

\paragraph{\(\nmpshape{2}{12}{18}\), end to end.} Here \(k = 5\), \(n = 18\),
formula \(\lceil (3 \cdot 12 \cdot 18 + 18)/2 \rceil = 333\).

\begin{center}
\begin{tabular}{lrr}
\hline
step & products & cumulative \\
\hline
diagonals + pairs at distance \(1..5\) & \(36 + 270\) & 306 \\
\(\;\;\)(leaves 11 of the 12 band cells per column) & & \\
\hline
budget remaining to the formula & \(27 = 3n/2\) & 333 \\
\hline
matching: \(g = \gcd(18,6) = 6\) triangles, one leftover each & & \\
\(\;\;\)6 matched pairs (12 columns) & \(18\) & 324 \\
\(\;\;\)6 leftover columns, demand \(\lfloor 6/2 \rfloor = 3\) \(Z\)-pairs
  & \((9)\) & \((333)\) \\
\(\;\;\)supply \(\lfloor 18/7 \rfloor = 2\) \(Z\)-pairs (4 columns)
  & \(6\) & 330 \\
\(\;\;\)2 stranded columns, naive & \(4\) & \textbf{334} \\
\hline
\end{tabular}
\end{center}

The arithmetic closes at 333 --- three \(Z\)-pairs cost exactly the 27
available --- and fails on geometry: three method-3 rows pairwise at least
\(7\) apart on \(C_{18}\) need arcs summing to \(21 > 18\). Two fit (rows
\(0\) and \(9\)); the third \(Z\)-pair is dropped, its two columns fall back
to two products each, and the scheme lands at 334. The same ledger runs for
the whole family:

\begin{center}
\begin{tabular}{lrrrrrr}
\hline
shape & \(k{+}1\) & \(g\) & demand & supply & excess & emitted \\
\hline
\(\nmpshape{2}{10}{15}\) & 5 & 5 & 2 & 2 & 0 & 233 \\
\(\nmpshape{2}{12}{18}\) & 6 & 6 & 3 & 2 & \(+1\) & 334 \\
\(\nmpshape{2}{14}{21}\) & 7 & 7 & 3 & 2 & \(+1\) & 453 \\
\(\nmpshape{2}{24}{30}\) & 12 & 6 & 3 & 2 & \(+1\) & 1096 \\
\(\nmpshape{2}{16}{24}\) & 8 & 8 & 4 & 2 & \(+2\) & 590 \\
\(\nmpshape{2}{18}{27}\) & 9 & 9 & 4 & 2 & \(+2\) & 745 \\
\(\nmpshape{2}{20}{30}\) & 10 & 10 & 5 & 2 & \(+3\) & 918 \\
\hline
\end{tabular}
\end{center}

\(\nmpshape{2}{10}{15}\) is included as the boundary case that \emph{does}
close: \(g = 5\) demands two \(Z\)-pairs, two fit, and the odd leftover is
absorbed by the ceiling --- which is exactly why the flagship
\(\nmpfield{\Integers}{2}{10}{15}{=}233\) exists while
\(\nmpshape{2}{12}{18}{=}333\) does not. The escape routes are those listed
in the honesty notes below (a twelve-product \((3,3,\mathrm{bridge})\)
completion, non-local atoms, or a four-product completion traded against a
saving elsewhere); note also that no published catalog holds a scheme at
these formula values either --- FMM-Lille's index quotes 333 at
\(\nmpshape{2}{12}{18}\) while its own artifact realises 334, the pattern
documented in the checkpoint below.

Every scheme is machine-verified at emission by a 20{,}000-sample
randomized spot check, the full residual over every tensor cell, and an
\emph{exact-rational symbolic} verification (the bilinear identity
\(\sum_k U_{a k} V_{b k} W_{c k} = T_{abc}\) checked coefficient-wise over
\(\Rationals\), no floating point); beyond that gate, every published file
was re-verified by an \emph{independent} checker sharing no code with the
generator: 465/465 verify the bilinear identity exactly over \(\Rationals\).

\begin{center}
\begin{tabular}{llll}
\hline
Family & swept & at formula & verified \\
\hline
square \(\nmpshape{2}{n}{n}\) & \(n \le 16\) & 14/14 & 14/14 \\
odd \(p\), \(p \le n \le 2p{-}1\) & \(p \le 13\) & 40/40 & 40/40 \\
even \(p\), \(p \le n \le 2p{-}1\) & \(p \le 14\) & 52/54 & 54/54 \\
chained, \(2p \le n \le 32\) & \(3 \le p \le 16\) & 196/196 & 196/196 \\
\hline
\end{tabular}
\end{center}

Highlights, all exact-formula:
\(\nmpfield{\Integers}{2}{10}{15}{=}233\) — at construction time
(2026-06) \emph{strictly below every published catalog} (FMM-Lille and
Perminov both listed 234);
\(\nmpfield{\Integers}{2}{12}{16}{=}296\) (then-published best 298); and
\(\nmpfield{\Integers}{2}{10}{16}{=}248\) (then absent from all catalogs).
The full sweep \(3 \le p \le 32\), \(p \le n \le 32\) — \textbf{465
schemes, 459 at the exact formula} (the six exceptions are precisely the
\(g \ge 6\) family of the geometric limit above, at \(+1\)..\(+3\)) — is
emitted into the catalog's constructed tier with per-scheme provenance and
is regenerable from the emitter alone; at first (band-range) emission 197
schemes strictly improved on the union of all external catalogs, and the
chained range added 22 further strict improvements over everything
previously held, our own closure included
(e.g.\ \(\nmpfield{\Integers}{2}{15}{32}{=}736\) vs.\ 741).

\paragraph{Checkpoint: propagation into the community catalogs
(2026-07-12).} The claims above are date-stamped for a reason --- within
weeks, both catalogs absorbed parts of this program, illustrating the
attribution dynamics of Section~\ref{ssec:head-to-head}.
\begin{itemize}[itemsep=2pt]
  \item \emph{Perminov} now redistributes 14 of the 465 scheme files,
    across 11 shapes (\(\nmpshape{2}{9}{13}\) through
    \(\nmpshape{2}{13}{14}\), including
    \(\nmpshape{2}{10}{15}{=}233\) and
    \(\nmpshape{2}{12}{16}{=}296\)), credited under
    \code{schemes/known/matmulcatalog/} --- each of them his catalog's
    best holding for the field. His ternary-integer 233 at
    \(\nmpshape{2}{10}{15}\) is derived downstream of the same scheme.
    Agreement with Perminov at these shapes is our own work
    round-tripped, not an independent confirmation.
  \item \emph{FMM-Lille}'s index at \(\nmpshape{2}{10}{15}\) has moved
    \(234 \to 233\), citing Hopcroft--Kerr 1971 --- i.e.\ the closed
    form. To our knowledge the 465 files remain the only
    machine-verified witnesses that the formula is attained at these
    shapes.
  \item Conversely, for the \(g \ge 6\) family our impossibility
    analysis predicts the formula is \emph{not} attainable with local
    atoms --- and FMM's own artifacts corroborate it: at
    \(\nmpshape{2}{12}{18}\) the index quotes the formula's 333 while
    the published artifact realises 334 (we verified it locally rigid:
    no proportional triads, every factor-direction class span-tight),
    exactly our formula\(+1\); the same index-versus-artifact pattern
    holds across the family (\(\nmpshape{2}{14}{21}\),
    \(\nmpshape{2}{16}{24}\), \ldots). We reported the discrepancies
    upstream; whether any unpublished scheme attains the index values
    there is, at this date, an open question precisely aligned with our
    impossibility boundary.
\end{itemize}

\subsection{Honesty notes}

The construction is an upper-bound realisation; Hopcroft--Kerr proved
matching lower bounds only for \(\nmpshape{2}{2}{n}\) and
\(\nmpshape{2}{3}{3}\). The \((3,3,\cdot)\) impossibility theorem is exact
over all rank-1 atoms \emph{local to the three rows/columns} \(\{i,b,j\}\)
and scoped to the emitter's nine-product reusable set; widening the
reusable set to all twelve products of both adjacent Lemma-2 pairs (which
couples the block decomposition), admitting non-local atoms, or trading a
four-product completion against savings elsewhere remain open — these are
the only identified routes to the six \(g \ge 6\) shapes still at
\(+1\)..\(+3\) over the formula (themselves below every published catalog).
Lower bounds beyond Hopcroft--Kerr's own remain out of scope. Short
computer-algebra scripts accompanying the catalog reproduce every claim in
minutes.

\section{Exhaustive derivation search}\label{sec:search}

The exhaustive derivation search converges the catalog to a fixed point
under the derivation operators of Section \ref{sec:strategies}. It
propagates rank improvements along a dependency DAG of shapes
(below) rather than re-deriving everything from scratch; the loop
terminates when no shape improves.

\subsection{Two-phase structure}

A naive closure interleaves \emph{search} (find the best
recombination) and \emph{materialise} (build and write the concrete
\((U, V, W)\)) at every shape, every round. We split them:

\begin{description}[itemsep=2pt]
  \item[Phase 1 -- Search.] Iterate rounds. For each shape, run the
    derivation search to find the best predicted rank
    \(\hat r = \sum_k r_{\text{sub-}k}\). Strict improvements over the
    shape's current best (catalog \emph{or} overlay) are recorded in
    an in-memory \emph{pending overlay}, keyed by shape. The overlay
    is consulted by the rank resolver in subsequent rounds, so a win
    at round \(N\) propagates to round \(N+1\) without writing
    anything to disk.\footnote{The overlay is also a correctness
    device: an earlier incarnation wrote each winner to disk and
    relied on the next round to reload it, but the catalog lookup was
    loaded once per sweep, so round \(N{+}1\) silently reused round
    \(N\)'s snapshot and never saw the new entries.} Repeat until a
    round produces no new overlay entries.
  \item[Phase 2 -- Materialise.] Iterate the overlay in
    dependency order (smaller maximum-dimension first). For each
    entry, materialise the winning strategy
    (\code{Recombination.constructWithAllocation} or its concat /
    Kronecker analogue). Sub-product lookups are served from a
    composed lookup that returns just-materialised winners alongside
    the on-disk catalog, so a larger winner can use a smaller winner
    from this same phase as a sub-product. Spot-check the result
    against the predicted rank, verify, write to disk.
\end{description}

Theorem \ref{thm:nonoverlap} guarantees that the rank propagated by
Phase 1 equals the rank produced by Phase 2 for every entry that
materialises successfully (the rare failures are infrastructural --
e.g.\ a sub-product that became unavailable between rounds).

\paragraph{The impact DAG, not a full sweep.} Shapes and their
derivation candidates form a directed acyclic graph of \emph{impacts}:
an edge \(\sigma \to \tau\) exists whenever a rank improvement at shape
\(\sigma\) can lower the predicted rank at \(\tau\) --- because \(\tau\)
Kron-, concat- or recombines a copy of \(\sigma\) (an \emph{upward} edge),
or because \(\tau\) is a projection of \(\sigma\) (a \emph{downward}
edge; Section~\ref{ssec:recombination}). A win therefore cannot change
any shape that is not downstream of it. The search exploits exactly this:
the frontier driver keeps a priority queue of
seeds keyed by \(\omega\) (best first), pops a seed, and on each
improvement pushes back \emph{only} the shapes that seed can impact ---
so it rides the DAG, touching a shape only when one of its sub-shapes has
just improved. We do \emph{not} perform a full Cartesian sweep of all
shapes against all strategies every round: work is proportional to the
impactful edges actually traversed, not to
\(|\text{shapes}| \times |\text{strategies}| \times |\text{rounds}|\).
The two cost tiers of this DAG also differ --- an upward edge is a cheap
rank-arithmetic re-evaluation, whereas a downward (projection) edge
requires materialising and DCE-ing the parent --- so the closure fans out
upward eagerly and schedules the downward work.

\paragraph{Why the split matters.} Materialisation involves array
allocation, factor-matrix arithmetic, spot-check sampling, and disk
I/O: a target at \(\nmpshape{17}{17}{17}\) takes a fraction of a
second to predict and a few seconds to materialise, and at
\(\nmpshape{32}{32}{32}\) the spread becomes minutes. Postponing
materialisation lets a search round terminate as soon as no new
predictions are found, even with dozens of shapes pending
materialise work. It also decouples \emph{verification}: full
random-matmul spot-checks can run in batch against any materialised
scheme later; inside the loop only the cheap predicted-vs-actual
rank match is needed to prove the materialise step succeeded.

\subsection{Rebuilding the catalog from atoms: why the closure is multipass}
\label{ssec:multipass}

The closure's job, stated in the strongest form, is to rebuild the entire
catalog from the \emph{atoms} alone (Section \ref{sec:architecture}) by
repeatedly applying the operators of Section \ref{sec:strategies}; the
round-based fixed point makes this safe without hand-ordering which shape
depends on which. For an \emph{upward}-only operator set a fixed point
would be overkill: Kronecker, axis concatenation, recombination and the
serendipitous product (Section \ref{ssec:serendipitous}) all build a
\emph{larger} shape from \emph{smaller} ones, so a single sweep in
increasing maximum dimension suffices --- every sub-product is final before
its consumer is processed.

The multipass design earns its keep once an operator runs \emph{downward}.
The canonical example is \emph{projection}, which derives a scheme for a
\emph{smaller} shape by zeroing a slice of a larger scheme and
dead-code-eliminating the products that fed only the dropped index; the
operator and its \emph{projection-margin} score \(\mu\) are defined in
Section~\ref{ssec:projmargin}. A downward edge destroys the acyclicity in
dimension: improving \(\nmpshape{26}{26}{26}\) can lower
\(\nmpshape{25}{25}{25}\), which a smaller-first sweep has already
finalised, so the catalog must instead iterate to a fixed point,
re-projecting neighbours after each cube improvement. This is also the
\emph{safer} design --- no special-casing of ``did some larger shape just
improve?''; a projection win is simply absorbed on the next round --- and
it means cube quality propagates downward through the whole catalog: a
single better \(\nmpshape{n}{n}{n}\) improves every nearby rectangular and
odd-cube shape at once. Operationally the closure fires a bounded
\emph{downward wave} on each catalog update: a freshly arrived scheme with
\(\mu>0\) is projected into its dimension-neighbours, any neighbour that
improves is enqueued for its own projections, and termination is immediate
since dimension strictly decreases.

\subsection{Pool and allocation enumeration}

The derivation search at a shape \(\nmpshape{n}{m}{p}\) enumerates,
over a pool of outer schemes:
\begin{itemize}[itemsep=2pt]
  \item per-axis allocations of \(n / m / p\) into the outer scheme's
    block counts, optionally filtered by a maximum per-allocation
    imbalance \(\max(\mathbf{a}) - \min(\mathbf{a})\) and capped via a
    top-\(K\)-by-balance pre-selector;
  \item over-allocation by per-axis padding \(\delta\), paired with a
    corresponding tail-peel pattern (Section \ref{ssec:peel}); and
  \item axis-flip orbit masks of the outer scheme, evaluated
    analytically via per-product block-support bitmasks, so the
    8-mask coverage costs only a constant factor over single-mask
    evaluation.
\end{itemize}
The default pool carries a small number of high-value outer schemes
(Strassen, Winograd, Smirnov 3-row templates, a few imported Pan / AT
/ FMM bases); an extended mode adds every leaf catalog entry as an
outer template, at which point the capped allocation enumerator
becomes load-bearing. Engineering details (caching, pool
configuration, budget flags) are in Appendix~\ref{app:implementation}.

\subsection{Allocation optimisation by branch-and-bound}\label{ssec:alloc-bnb}

Enumerating allocations naively is hopeless: for a fixed outer scheme of
shape \(\nmpshape{n_o}{m_o}{p_o}\), the number of ways to split the target
is \(\binom{N-1}{n_o-1}\binom{M-1}{m_o-1}\binom{P-1}{p_o-1}\), which blows
up well before \(N = 32\). Yet the inner problem --- given a target,
template, and orientation, find the allocation minimising
\(\sum_{\sigma} R(\sigma)\) over the sub-problem multiset (Section
\ref{ssec:decomposition}) --- is solved \emph{exactly} by a
branch-and-bound over the allocation grid (\code{AllocationOptimizer}).

The search fixes the three axis partitions in turn (\(\mathbf{a}_N\),
then \(\mathbf{a}_M\), then \(\mathbf{a}_P\)), depth-first. Two ingredients
make it cheap:
\begin{itemize}[itemsep=2pt]
  \item \textbf{A strong incumbent.} The Kronecker (uniform-allocation)
    rank is a ready upper bound, seeded before the descent; most subtrees
    then fall to the bound immediately.
  \item \textbf{An admissible lower bound.} At a partial node --- some
    axes resolved, others open --- each product is charged the smallest
    rank it could still take (the resolved axes' sizes, a floor on the
    open ones). Because \(R\) is monotone in the dimensions this never
    overestimates, so a node whose bound already meets the incumbent is
    pruned without expansion.
\end{itemize}
Branch-and-bound returns the \emph{same} optimum as the exhaustive grid
while visiting far fewer allocations.

The optimiser runs under a configurable budget --- exact, capped by a
known upper bound, or an anytime node cap --- and every result carries
its optimality tier per the discipline of Section~\ref{sec:intro}: an
exhaustive descent proves the global minimum \emph{over allocations} for
that template and orientation (optimal-within-scope, never the tensor
rank). Otherwise the reported rank is a bound unless the run reports
exhaustive.

\subsection{Sound pruning from cheap candidates}\label{ssec:cheap-pruning}

The derivation search has three families of candidate strategies
at every target shape: multi-base recombination (heavy --- iterates
the entire pool, every allocation, every axis-flip mask),
\emph{concat-split} (cheap --- a single axis split, two SOTA
lookups), and \emph{Kronecker} (cheap --- nested factor enumeration
of \(n, m, p\), all SOTA lookups). Cheap candidates finish in
milliseconds; the heavy recombination sweep can take minutes on a
single shape at high dim.

We exploit this asymmetry. Concat and Kronecker run \emph{first},
producing an upper bound \(\hat r_{\text{cheap}} = \min(r_{\text{concat}},
r_{\text{kron}})\). The recombination sweep is then seeded with
\(\text{bestRank} := \hat r_{\text{cheap}}\). Two sound prunes follow:

\begin{itemize}[itemsep=2pt]
  \item \emph{Per-base lower bound.} For an outer base of rank \(r\),
    the recombination produces \(r\) sub-products, each costing
    \(\ge 1\); the pair-fusion pass (Section \ref{ssec:pair-fusion})
    only applies pairs whose savings are positive, so pair-cost is
    at least the unpaired sum on degenerate-shape sub-products.
    Hence \(\text{total} \ge r\) under any allocation. If
    \(r \ge \text{bestRank}\), the entire base is skipped.
  \item \emph{Best-so-far gate.} After each (allocation, mask)
    tuple completes, the predicted rank is
    compared against the running \(\text{bestRank}\), updating the
    candidate only on strict improvement.
\end{itemize}

We deliberately do \emph{not} prune \emph{inside} the sub-product
accumulation loop. A running sum above \(\text{bestRank}\) does not
imply the paired total is above it, since pair-fusion can reduce
the sum by tens of percent on cyclic-shape multisets. Adding a
slack constant would be a heuristic, and we keep the search at
SAT-style soundness: every prune is provably correct.

A canonical case: at \(\nmpshape{12}{12}{12}\) the Kronecker pair
\(\nmpshape{2}{4}{4}{=}26 \otimes \nmpshape{6}{3}{3}{=}40\) gives 1040 in
milliseconds; seeded with that bound, the recombination sweep on the
AlphaEvolve \(\nmpshape{5}{5}{5}{=}93\) base (whose best is 1071) drops
every non-improving allocation instead of grinding for minutes on
worthless intermediate bests. Beyond these provably-correct prunes, any
heuristic that drops part of the allocation space (balanced-only
filters, top-\(N\) caps) is opt-in --- a flag the user explicitly
enables, never a silent default --- because unbalanced cubic targets and
bridge-shape (Hopcroft--Kerr type) sub-problems live precisely in the
region a balanced-only prune discards.

\subsection{Low-\(\omega\) contamination}\label{ssec:omega-contamination}

A complementary optimisation reshapes the \emph{order} in which
shapes are processed during a closure round. The naive iteration is
lexicographic over \((n, m, p)\), so the search reaches
\(\nmpshape{12}{12}{12}\) before it reaches \(\nmpshape{17}{17}{17}\)
and so on. But the value of an early win depends on its implied
\(\omega\): a low-\(\omega\) seed \emph{contaminates} every larger shape
that consumes it as a sub-product, silently rewriting those shapes'
upper bounds.

We rank candidate outer bases (and known catalog shapes) by their
implied exponent
\(\omega(\nmpshape{n}{m}{p}{:}r) = \log_n(r) \cdot 3 / (1 + \log_n(m)
+ \log_n(p))\)
under the convention of Section \ref{sec:notation}, and process
low-\(\omega\) seeds first: e.g.\
\(\nmpshape{4}{4}{4}{=}48\) (\(\omega \approx 2.7925\); AlphaEvolve
over \(\Complex\), the DPS rationalisation \cite{dps2025} over
\(\Rationals/\Reals\)), Smirnov \(\nmpshape{3}{3}{6}{=}40\), Pan
\(\nmpshape{3}{4}{7}{=}63\), then up the ladder. Targets a seed can
contaminate (via Kronecker, recombination, or concat) are re-evaluated
in the same round once the seed is on the overlay, and the shape order
within each round respects this dependency --- every shape that
consumes a seed via a known composition rule waits for the seed's
search to complete --- so a fresh \(\omega\) reduction propagates
without a separate round. The seed selection itself is data-driven, from
an automatically regenerated ranking of every pool-eligible base by
\(\omega\) (Appendix~\ref{app:implementation}).

\section{Balance versus imbalance: the records are unbalanced}\label{sec:balance}

A recurring heuristic --- explicit in DIS09 and folklore well before --- is that
recombination should prefer \emph{balanced} block splits and avoid unbalanced
ones. We test this directly against the catalog and reach a sharper, two-sided
conclusion: \textbf{balance is better on average, but the records are
unbalanced.} A search that prunes unbalanced shapes is therefore average-case
sound yet \emph{record-blind} --- it discards exactly the shapes that set the
lowest exponents.

Throughout this section the per-shape exponent is
\(\omega(n,m,p) = 3\,\ln R(n,m,p) / \ln(nmp)\), where \(R\) is the best
catalogued non-commutative rank over \(\Reals\) (catalog snapshot
2026-06; the catalog holds a single \Reals-only scheme, so the same
figures hold over \(\Rationals\), the field all later sweeps default
to).

\subsection{On average, \texorpdfstring{\(\omega\)}{omega} rises with imbalance}
\label{ssec:omega-imbalance}

Table~\ref{tab:omega-imbalance} buckets every catalogued shape
\(\nmpshape{n}{m}{p}\) with \(2 \le n \le m \le p \le 16\) by its imbalance
\(p-n\), and reports the median \(\omega\) per bucket.

\begin{table}[h]
\centering
\caption{Median \(\omega\) over all catalogued shapes
\(2 \le n \le m \le p \le 16\), bucketed by imbalance \(p-n\). The median
climbs monotonically --- a \emph{typical} unbalanced shape is less efficient.}
\label{tab:omega-imbalance}
\begin{tabular}{rrr}
\toprule
imbalance \(p-n\) & \#shapes & median \(\omega\) \\
\midrule
0  & 15 & 2.8155 \\
1  & 28 & 2.8148 \\
2  & 39 & 2.8177 \\
3  & 48 & 2.8180 \\
4  & 55 & 2.8169 \\
5  & 60 & 2.8193 \\
6  & 63 & 2.8214 \\
7  & 64 & 2.8229 \\
8  & 63 & 2.8274 \\
9  & 60 & 2.8299 \\
10 & 55 & 2.8332 \\
11 & 48 & 2.8402 \\
12 & 39 & 2.8536 \\
13 & 28 & 2.8616 \\
14 & 15 & 2.8695 \\
\bottomrule
\end{tabular}
\end{table}

The median rises from \(2.8155\) (balanced) to \(2.8695\) at imbalance \(14\),
staying flat (\(\approx 2.817\)) out to imbalance \(\sim 7\) and then climbing
steeply --- the penalty for imbalance is convex, not linear. This is the
average-case fact the avoid-unbalanced heuristic rests on.

\subsection{\dots\ but the lowest exponents are unbalanced}
\label{ssec:omega-records}

The aggregate hides the tail. Among the same shapes, the fifteen smallest
\(\omega\) are almost all unbalanced, and the global minimiser in this range is
\(\nmpshape{3}{3}{6}\) at \(\omega = 2.7743\) --- below \emph{every} cubic:
\(\nmpshape{4}{4}{4}\) (\(2.7925\)), \(\nmpshape{8}{8}{8}\) (\(2.7974\)),
\(\nmpshape{12}{12}{12}\) (\(2.7957\)). The low-\(\omega\) frontier is dominated
by the \(2{:}1\)-ratio family \(\nmpshape{3}{3}{6}\), \(\nmpshape{6}{6}{12}\),
\(\nmpshape{9}{9}{12}\), \(\nmpshape{5}{5}{12}\), \(\nmpshape{3}{4}{6}\).
Pruning unbalanced splits removes precisely these record-holders.

\subsection{Why the advantage is not extractable by outer splitting}
\label{ssec:omega-not-extractable}

It is tempting to conclude that a recombination should \emph{seek} unbalanced
splits so its sub-blocks land on these efficient shapes. The catalog says
otherwise, and the mechanism is instructive. Consider \(\nmpshape{9}{9}{9}\)
recombined by Strassen \(\nmpshape{2}{2}{2}{=}7\): since \(9 = 5+4 = 6+3\), the
balanced split \((5,4)^3\) and the unbalanced \((6,3)^3\) both tile exactly (no
padding). Table~\ref{tab:9-faceoff} puts their seven-product multisets
side by side.

\begin{table}[h]
\centering
\caption{Strassen \(\nmpshape{2}{2}{2}\) recombination of
\(\nmpshape{9}{9}{9}\): the seven sub-products grouped by corner tier. The
unbalanced \((6,3)^3\) split saves \(63\) on the three one-max corners
(\(\nmpshape{3}{3}{6}{=}40\) vs \(61\)) but pays \(+60\) on the all-max corner
and \(+12\) on the middle tier --- net \(+9\), so balance wins here.}
\label{tab:9-faceoff}
\begin{tabular}{llrlrr}
\toprule
corner tier & \multicolumn{2}{c}{\((5,4)^3 \to 504\)} & \multicolumn{2}{c}{\((6,3)^3 \to 513\)} & \(\Delta\) \\
(multiplicity) & shape & subtotal & shape & subtotal & \\
\midrule
all-max \(\times 1\) & \(\nmpshape{5}{5}{5}{=}93\)  & 93  & \(\nmpshape{6}{6}{6}{=}153\) & 153 & \(+60\) \\
two-max \(\times 3\) & \(\nmpshape{5}{5}{4}{=}76\)  & 228 & \(\nmpshape{6}{6}{3}{=}80\)  & 240 & \(+12\) \\
one-max \(\times 3\) & \(\nmpshape{5}{4}{4}{=}61\)  & 183 & \(\nmpshape{3}{3}{6}{=}40\)  & 120 & \(-63\) \\
\midrule
total & & \textbf{504} & & \textbf{513} & \(+9\) \\
\bottomrule
\end{tabular}
\end{table}

A \(\nmpshape{2}{2}{2}\) split realises seven of the eight corners of
\(\{a_1,a_2\}\times\{b_1,b_2\}\times\{c_1,c_2\}\), and \emph{one of them is the
all-max corner} \((\max,\max,\max)\). Because \(R\) grows like
\(\text{size}^{\,\omega}\) with \(\omega \approx 2.8\), pushing the max block
from \(5\) to \(6\) costs more than the cheap corners recover. The efficiency
of \(\nmpshape{3}{3}{6}\) is real, but a single block split cannot isolate it
from its complementary all-max corner.

The corollary is constructive: an efficient unbalanced atom is exploited not by
\emph{splitting} a target down onto it, but by \emph{composing up} from it via
the Kronecker product, where the favourable ratio compounds instead of dragging
a complementary max-corner along (cf.~\ref{ssec:disjoint-sum}).

\subsection{Imbalance can win the split, too}
\label{ssec:imbalance-split-wins}

Section~\ref{ssec:omega-not-extractable} showed balance winning a single
example; it is not the rule. We swept every target
\(\nmpshape{n}{m}{p}\), \(3 \le n \le m \le p \le 16\), computing for a fixed
base the \emph{exact} rank-minimising allocation by the branch-and-bound of
Section~\ref{ssec:alloc-bnb} (balanced incumbent + admissible lower bound,
verified against exhaustive search),
and flagged every target whose optimal split is strictly cheaper than the
balanced split. The search is deliberately \emph{single-base} and
\emph{non-recursive}: sub-blocks are costed by the catalog's best rank as a
black box, so a ``win'' means the unbalanced split beats its own balanced
sibling for that base --- not that it beats the global record.

Unbalanced splits win for \(15\) targets under Strassen \(\nmpshape{2}{2}{2}\)
and \(120\) under Strassen--Winograd (whose asymmetric product supports reward
imbalance far more often). The effect is not confined to rectangular targets:
the cubic \(\nmpshape{13}{13}{13}\) is itself a win.
Table~\ref{tab:imbalance-wins} lists representatives.

\begin{table}[h]
\centering
\caption{Targets where the rank-optimal single-base split is \emph{unbalanced}
(strictly cheaper than the balanced split). Single base, no recursion;
sub-blocks costed by catalog SOTA. Savings are small but the existence is the
point: a hard ``always split balanced'' rule is provably suboptimal.}
\label{tab:imbalance-wins}
\begin{tabular}{llrrr}
\toprule
base & target & balanced & optimal split & best \\
\midrule
Strassen & \(\nmpshape{13}{13}{13}\) & 1434 & \((5,8)(6,7)(6,7)\) & 1432 \\
Strassen & \(\nmpshape{6}{6}{10}\)   & 252  & \((3,3)(3,3)(4,6)\) & 247 \\
Strassen & \(\nmpshape{12}{12}{14}\) & 1281 & \((6,6)(6,6)(6,8)\) & 1271 \\
Winograd & \(\nmpshape{9}{13}{15}\)  & 1156 & \((5,4)(7,6)(7,8)\) & 1145 \\
Winograd & \(\nmpshape{9}{14}{15}\)  & 1237 & \((4,5)(7,7)(8,7)\) & 1226 \\
\(\nmpshape{2}{2}{3}\) & \(\nmpshape{5}{5}{12}\) & 235 & \((2,3)(2,3)(6,5,\mathbf{1})\) & 232 \\
\bottomrule
\end{tabular}
\end{table}

The last row is the most imbalanced win we found in the sweep
(\(3 \le n \le m \le p \le 16\)): for \(\nmpshape{5}{5}{12}\) under base
\(\nmpshape{2}{2}{3}\) the optimal split of the size-12 axis is
\((6,5,\mathbf{1})\) --- it \emph{peels a width-1 strip}. The products that
touch the unit block compute \(\nmpshape{\cdot}{\cdot}{1}\) sub-shapes at
naive-product rank (here \(R = 4, 6, 6, 9\)), costing almost nothing, while
the remaining products route through the cheap \(\nmpshape{3}{3}{6}{=}40\)
corner --- the unbalanced-extraction mechanism in its purest form.

So the practical rule is not ``prefer balance'' but ``\emph{search} the split,
balance-first'': seed the branch-and-bound with the balanced allocation (an
excellent, usually-optimal incumbent) and let an admissible lower bound prune
the rest. When a known upper bound for the target exists --- typically the
Kronecker rank of a factorisation \(\nmpshape{n}{m}{p} =
\nmpshape{n_1}{m_1}{p_1}\otimes\nmpshape{n_2}{m_2}{p_2}\) --- it is used as the
incumbent: if the base's root lower bound already meets or exceeds it, the base
cannot improve and the allocation sweep is dropped outright.

\subsection{Open questions}
\label{ssec:balance-open}

Two extrapolations are untested here. (i) \emph{Scale}: the catalog thins above
max-dim \(16\), so Table~\ref{tab:omega-imbalance} cannot yet say whether the
average \(\omega\)-vs-imbalance slope persists, flattens, or reverses for large
\(N\); the asymptotic \(\tau\)-theorem constructions suggest the unbalanced tail
should only get richer. (ii) \emph{Rectangular targets}: the metric above
folds rectangular shapes into a single imbalance scalar \(p-n\); a genuinely
rectangular sweep (independent control of \(m-n\) and \(p-m\)) may expose
direction-dependent structure the cubic-centred view misses.

\section{Comparison with the community catalogs}\label{sec:tables}

Our catalog's comparison refreshes that of Drevet--Islam--Schost 2009
(``DIS09''~\cite{drisc09}; Tables 3 and 4 in the original numbering)
with the following changes in structure relative to that reference.

\begin{itemize}[itemsep=2pt]
  \item Per field, not per merged cluster. Each of \Integers, \Rationals,
    \Reals, \Complex, \Ftwo, \Fthree\ gets its own column --- the fields
    are kept \emph{distinct} (there is no ``R/Q/Z'' merge) --- and ternary-integer
    (\(\Integers\)T) is surfaced as a per-scheme flag. The inclusion
    \(\Integers \subset \Rationals \subset \Reals \subset \Complex\) is
    used only when computing \emph{which} fields a scheme satisfies (field
    widening), never to collapse columns.
  \item Commutative axis split out. DIS09 Table 4 (commutative)
    is preserved as a separate set of columns alongside the
    non-commutative ones, so the reader cannot accidentally compare
    Rosowski 21 to Laderman 23 (Section \ref{sec:notation}).
  \item Every cell carries an \code{attribution\_for\_rank}, which is
    the earliest source that established the bound. The
    contemporary importing source (FMM-Lille, Perminov, AlphaTensor
    bulk imports) is recorded separately and does not appear in the
    comparison cells.
  \item Border-rank entries are split out separately; the rank
    comparison reports tensor rank only.
\end{itemize}

\subsection{SOTA definition}\label{ssec:sota}

Throughout the catalog, ``SOTA'' means the lowest known rank for
\((\nmpshape{n}{m}{p}, K, \text{commutative})\) where the rank is
witnessed by either an explicit catalog entry, a cited bound, or a
derived bound. This is a working definition; several corner cases
require explicit decisions, currently captured in
\code{paper/sota-conventions.md}:
\begin{description}[itemsep=2pt]
  \item[Mixed-field.] When a scheme is valid over \Reals\ but its
    coefficients all lie in \Integers, we report it as both an
    R-cell and a Q-cell. The field-widening sweep drives the
    cross-population.
  \item[Border vs exact.] Border-rank bounds are excluded from the
    tensor-rank comparison and appear in the border-rank appendix only.
  \item[Partially verified.] Cited bounds that lack a verified
    scheme are flagged with a dagger (\dag) in the cell.
  \item[$\Ftwo$/$\Fthree$.] Integer schemes reduce mod 2 and mod 3;
    the field stampers grant \Ftwo/\Fthree\ membership systematically
    (Section~\ref{sec:notation}).
\end{description}

\subsection{Live comparison versus date-stamped snapshot}
\label{ssec:no-static-ranks}

The per-shape rank landscape is \emph{actively moving}: lower-rank
schemes for fixed formats appear on a near-continuous basis --- from
this project's own search, from the external catalogs we track
(FMM-Lille~\cite{fmmlille}, Perminov~\cite{perminov}), and from
learning-based efforts (AlphaTensor, AlphaEvolve). Any static
per-shape rank table would be stale before the paper appears. The
exhaustive, continuously-regenerated comparison therefore lives in the
catalog browser at
\href{https://www.solven.eu/matmulcatalog/}{\texttt{solven.eu/matmulcatalog}}
(machine twins under \code{docs/comparison/}), which is rebuilt from
the catalog on every change and is the source of truth for any
specific rank.

What the paper prints instead is a \emph{date-stamped snapshot}
(2026-07-12 throughout this section), clearly labelled as such and
regenerable by a single command --- enough to characterise \emph{how}
the pipeline compares against the community catalogs without freezing
a leaderboard. Every number below is machine-emitted from the same
JSON the browser renders.

\subsection{Head-to-head over \Rationals}\label{ssec:head-to-head}

We compare over \(\Rationals\) --- the field where both external
catalogs carry the most data, and our sweeps' default field. Per shape
we track our best \emph{derived} rank (produced by the search of
Section~\ref{sec:search}) separately from our best \emph{imported}
rank (an external scheme we carry), so that ``we hold the best known
scheme'' and ``our own derivation produced the best known rank'' are
never conflated. As of 2026-07-12, over \(5440\) comparable shapes
(\(\max\) dimension \(\le 32\)):

\begin{center}
\begin{tabular}{lr}
\toprule
our derived rank strictly beats every external holding & 908 \\
our derived rank ties the external best & 3688 \\
we hold the external best only as an import & 184 \\
an external catalog strictly beats everything we hold & 660 \\
\bottomrule
\end{tabular}
\end{center}

\noindent Table~\ref{tab:us-vs-catalogs} lists the largest-margin
wins; the 908 wins carry margins totalling \(20{,}801\) ranks (median
14, maximum 199).


\begin{table}[h]
\centering
\caption{Shapes where this work's own \emph{derived} non-commutative rank over $\mathbb{Q}$ strictly improves on every external holding (FMM-Lille, Perminov). The \emph{carry} column is the best rank we hold as an import over the same field; the \emph{operator} column is the root of our derived scheme's lineage (see Table~\ref{tab:win-anatomy}). Top 24 of 908 such shapes by margin; full data in \code{docs/comparison/us-vs-fmm-vs-perminov-Q.json}.}
\label{tab:us-vs-catalogs}
\begin{tabular}{lrrrrl}
\hline
shape & this work & carry & FMM-Lille & Perminov & operator \\
\hline
$\langle 29,31,32 \rangle$ & \textbf{14506} & --- & 14705 & --- & projection \\
$\langle 25,32,32 \rangle$ & \textbf{13356} & --- & 13533 & --- & recombination \\
$\langle 30,31,32 \rangle$ & \textbf{14682} & --- & 14841 & --- & projection \\
$\langle 25,27,28 \rangle$ & \textbf{10006} & --- & 10164 & --- & projection \\
$\langle 28,29,30 \rangle$ & \textbf{12285} & --- & 12433 & --- & projection \\
$\langle 21,25,31 \rangle$ & \textbf{8899} & --- & 9040 & --- & recombination \\
$\langle 26,27,28 \rangle$ & \textbf{10160} & --- & 10297 & --- & projection \\
$\langle 30,31,31 \rangle$ & \textbf{14369} & --- & 14502 & --- & projection \\
$\langle 27,29,29 \rangle$ & \textbf{11820} & --- & 11952 & --- & projection \\
$\langle 15,28,31 \rangle$ & \textbf{7134} & --- & 7266 & --- & recombination \\
$\langle 22,23,29 \rangle$ & \textbf{8074} & --- & 8205 & --- & recombination \\
$\langle 24,25,26 \rangle$ & \textbf{8291} & --- & 8417 & --- & projection \\
$\langle 28,29,29 \rangle$ & \textbf{12005} & --- & 12131 & --- & projection \\
$\langle 22,25,31 \rangle$ & \textbf{9302} & --- & 9424 & --- & recombination \\
$\langle 27,27,32 \rangle$ & \textbf{12150} & --- & 12274 & 12270 & recombination \\
$\langle 20,27,31 \rangle$ & \textbf{9080} & --- & 9198 & --- & recombination \\
$\langle 26,27,27 \rangle$ & \textbf{9913} & --- & 10030 & --- & projection \\
$\langle 18,22,31 \rangle$ & \textbf{6745} & --- & 6852 & --- & recombination \\
$\langle 31,31,32 \rangle$ & \textbf{14836} & --- & 14940 & --- & projection \\
$\langle 21,24,31 \rangle$ & \textbf{8386} & --- & 8488 & --- & recombination \\
$\langle 13,15,30 \rangle$ & \textbf{3390} & --- & 3490 & --- & recombination \\
$\langle 17,21,31 \rangle$ & \textbf{6189} & --- & 6287 & --- & recombination \\
$\langle 29,29,30 \rangle$ & \textbf{12429} & --- & 12526 & --- & projection \\
$\langle 29,29,31 \rangle$ & \textbf{13270} & --- & 13367 & --- & concatenation \\
\hline
\end{tabular}
\end{table}

\paragraph{Perminov ties are not independent confirmations.}
The comparison must be read with the data-flow direction in mind:
Perminov's catalog \emph{ingests} community results, including ours
--- as of 2026-07-12 it redistributes 14 of our scheme files across 11
shapes (credited under \code{schemes/known/matmulcatalog/}), all of
them his best holding for the field, all from our Hopcroft--Kerr
program (Section~\ref{sec:hk71}). A tie with Perminov may therefore be
our own scheme round-tripped. FMM-Lille likewise updates its index in
response to community results (Section~\ref{ssec:losses} and the
\(\nmpshape{17}{17}{17}\) history in Section~\ref{sec:nonoverlap}).
This is the attribution discipline of Section~\ref{sec:architecture}
applied in reverse: we date-stamp our claims because the catalogs we
compare against absorb them.

\subsection{Anatomy of the wins}\label{ssec:win-anatomy}

Where do the 908 wins come from? Table~\ref{tab:win-anatomy} classifies
each winning scheme by the root operator of its recorded lineage.


\begin{table}[h]
\centering
\caption{Anatomy of the 908 strict wins over $\mathbb{Q}$ (Table~\ref{tab:us-vs-catalogs}): root operator of the winning scheme's lineage, and the most frequent combining bases among winning recombinations (orientation variants folded). Of the recombination-rooted wins, 762 use \emph{unbalanced} allocations and 7 balanced ones. Margins over the best external holding: 20801 ranks in total, median 14, maximum 199.}
\label{tab:win-anatomy}
\begin{tabular}{lr@{\qquad}lr}
\hline
root operator & wins & top combining base & wins \\
\hline
recombination with allocation & 769 & $\langle 2,2,2 \rangle$ & 295 \\
axis concatenation & 48 & $\langle 2,4,4 \rangle$ & 150 \\
downward projection & 39 & $\langle 2,3,3 \rangle$ & 70 \\
serendipitous product & 29 & $\langle 2,5,5 \rangle$ & 68 \\
contraction sum & 22 & $\langle 3,4,5 \rangle$ & 22 \\
Kronecker product & 1 & $\langle 4,4,5 \rangle$ & 20 \\
\hline
\end{tabular}
\end{table}

Three findings stand out.
\begin{itemize}[itemsep=2pt]
  \item \textbf{Recombination with unbalanced allocations is the
    engine.} \(769/908\) wins (\(85\%\)) are recombination-rooted ---
    and of those, \(762\) use \emph{unbalanced} block splits against
    \(7\) balanced ones. The balance study of Section~\ref{sec:balance}
    is thereby confirmed at census scale: the records live where the
    split is uneven.
  \item \textbf{Small, well-oriented bases dominate.} The winning
    combining bases are overwhelmingly the small
    \(\nmpshape{2}{\cdot}{\cdot}\) schemes ---
    \(\nmpshape{2}{2}{2}\) (Strassen/Winograd), \(\nmpshape{2}{4}{4}\),
    \(\nmpshape{2}{3}{3}\), \(\nmpshape{2}{5}{5}\) --- frequently under
    a non-identity axis orientation. Breadth of the base pool and the
    axis-flip/orientation orbit (Section~\ref{ssec:axis-flip}), not any
    single clever base, is what pays.
  \item \textbf{The reducing operators supply the long tail.} Downward
    projection roots 39 wins, the serendipitous product 29, axis
    concatenation 48, contraction sums 22 --- individually modest,
    collectively the difference between a good catalog and the
    frontier. All of them also appear as \emph{inner} nodes of
    recombination-rooted wins.
\end{itemize}

\subsection{Anatomy of the losses}\label{ssec:losses}

Honesty requires the same census in the other direction, and it
decomposes into three very different kinds of ``behind''.

\paragraph{Against FMM-Lille's raw index.} Over the shapes FMM covers,
we are better on \(1476\), tied on \(3924\), and behind on \(26\)
(total deficit \(367\) ranks). But the 26 must be audited before being
read as a to-do list: we downloaded and rank-counted every artifact
behind them. For \(20\) of the \(26\), the published artifact does
\emph{not} attain the index rank --- the file realises a higher rank
than the page claims (e.g.\ \(\nmpshape{22}{23}{23}\): index 6435,
artifact 6501), is a placeholder containing no tensor data
(\(\nmpshape{16}{20}{28}\), \(\nmpshape{16}{20}{29}\),
\(\nmpshape{20}{24}{25}\)), or carries a construction description
whose arithmetic does not check out. These rows are
\emph{upstream-unverifiable}: we exclude them from the deficit rather
than chase numbers no published scheme witnesses (findings reported to
the maintainers).

\paragraph{The audited deficit: six shapes, twelve ranks.} What
remains after the audit is six shapes where FMM's \emph{verified}
artifacts genuinely beat us: four at \(-1\), one at \(-2\), one at
\(-6\) --- twelve ranks in total, against the \(1476\) shapes where we
lead. Every one of the six traces to the fusion-kernel device of
Section~\ref{sec:nonoverlap}: cross-slot product sharing that our
composition operators deliberately do not perform. The campaign
verified exhaustively that no remaining row is closable by existing
engines or their parameterisations; closing them requires the new
constructor flagged in Section~\ref{sec:openquestions}. The boundary
between composition and hand-crafted sharing is, at this snapshot,
worth exactly twelve ranks over thousands of shapes.

\paragraph{Against Perminov.} The remaining \(634\) loss rows are all
Perminov-side and of a different nature entirely: they are
\emph{published} schemes (flip-graph outputs with explicit factor
matrices) that we have not ingested --- an import backlog, not a
derivation gap. Each is closable by carrying the external scheme; none
witnesses a construction technique our operators lack. We keep them in
the loss column nonetheless: a comparison that quietly assumed every
importable scheme as ``ours'' would make the win column meaningless.

\section{Open questions and future work}\label{sec:openquestions}

We close with what the catalog cannot currently answer and where the
next bits of attention should go.

\paragraph{Fusion-kernel extraction.} The 2026-07 gap-closing
campaign against FMM-Lille (Section~\ref{sec:tables}) ended with a
sharp diagnosis: every remaining verified deficit --- six shapes, twelve
ranks in total --- traces to a single device our composition operators
do not have, the product-level \emph{fusion kernel} of
Section~\ref{sec:nonoverlap}: a handful of mixed products serving two
or more recursion slots at once. Extracting a reusable identity from
the readable specimens (one two-slot unit computes two
\(\nmpshape{2}{3}{3}\) slot-problems with 29 products instead of 30),
deciding whether the unit is self-contained or relies on whole-kernel
cancellation, and turning it into a constructor the allocation search
can elect, is the concrete next research program. Notably, disjoint-sum
($\tau$-theorem) search is \emph{not} the missing device: gaps we
initially attributed to $\tau$ (including
\(\nmpfield{\Rationals}{17}{17}{17}\), where our recombination search
reaches \(2930\)) closed via ordinary recombination, and the campaign
verified exhaustively that no remaining row is closable by existing
engines or their parameterisations. A constructive disjoint-sum
operator retains a narrower motivation: \emph{reproducing} the imported
large-format Schwartz--Zwecher schemes rather than carrying them as
opaque matrices.

\paragraph{Salt-tolerant prediction.} Any post-construction sharing
discovery would break the search-rank-equals-materialised-rank
invariant (Section~\ref{sec:nonoverlap}); if added, it belongs in a
separate polish phase after the closure converges, with the interaction
with lazy materialisation worked out first.

\paragraph{Border-rank track.} The catalog is exact-rank only. A
border-rank track would admit results like Bini's
\(\nmpfield{\Reals}{2}{2}{2}\) border-rank-\(7\) constructions and
Schönhage $\tau$-theorem upper bounds; a separate border-rank listing
already feeds the catalog browser but not yet the comparison tables.

\paragraph{Lower-bound integration.} The lower-bound registry ---
Strassen's \(\nmpfield{\Reals}{2}{2}{2} \ge 7\), Bl\"aser's family, the
recent Wang 2026 \(\nmpfield{\Ftwo}{3}{3}{3} \ge 20\) --- is not yet
surfaced in the comparison tables; a gap-to-lower-bound column (``49,
gap 4 to lower-bound 45'') would make the SOTA frontier read more
cleanly.

\paragraph{Border between catalog and hand-crafted-discovery.} When a
hand-crafted construction lands at a shape where composition was
\emph{also} predicting the same rank, the lineage should record both
discoveries and the comparison table should report both; today the
catalog picks one. A two-attribution column would force the
distinction into the table.

\paragraph{Reproducibility tooling.} The paper now prints a
date-stamped snapshot of the comparison; the live catalog browser
remains the source of truth, regenerated from the catalog and the
cited- and derived-bound registries per the conventions of Section
\ref{ssec:sota}. The remaining work item is finalising the SOTA
conventions document, so the comparison has a deterministic rule for
every corner case --- the prerequisite for the exhaustive SOTA review
that the curated paper examples wait on.

\section{On the role of the AI coding agent}\label{sec:aiagent}

This catalog, its search engine, and much of this paper were developed in close
collaboration with AI agents --- primarily Anthropic's Claude via Claude Code, with part of
the theoretical investigation carried out with OpenAI's ChatGPT~5.5. We
record the division of labour explicitly --- in the same spirit as the
optimality discipline of the rest of the paper: say exactly what was done, and
how confidently it can be attributed.

A fairly clean split emerged between \emph{direction} and \emph{execution}: the
human author supplied the questions, the framing, and final judgement; the
agent supplied most of the execution --- implementation, routine derivations,
test scaffolding, provenance bookkeeping --- and a substantial share of the
\emph{theory}, but only once a direction was fixed. Two concrete cases, with
our (necessarily subjective) attribution:

\begin{itemize}\itemsep2pt
  \item \textbf{The recombination branch-and-bound} --- the allocation/orbit
  search that assigns target dimensions to a base's slots and prunes by a rank
  bound --- was roughly \textbf{95\% agent}: both the idea to cast allocation as
  a bounded search and its implementation.

  \item \textbf{The exhaustive enumeration of recombination multisets}
  (Section~\ref{sec:multisets}): the \emph{theory} --- the per-axis decoupling
  lemma that turns an intractable \(39\text{k}^7\) search into an exact,
  direction-stratified enumeration --- and its implementation were again roughly
  \textbf{95\% agent}. But the \emph{idea to investigate the multiset
  perspective at all} was \textbf{100\% the human author}: the agent did not
  propose that this was a fruitful lens, it only developed and executed it once
  asked.
\end{itemize}

A second, less flattering pattern was just as consistent: the agent
\textbf{hallucinated frequently} --- confident, plausible, and \emph{wrong}
intermediate claims: false ``facts'', derivations that did not hold, bugs
reported as fixes. The most telling instance was recurring: asked to explain
how the Universit\'e de Lille FMM catalog attains a particular finite-size
rank, the agent would repeatedly reach for Sch\"onhage's \(\tau\)-theorem (the
asymptotic sum inequality)~\cite{schonhage1981} as the mechanism --- something
this project neither runs nor derives, and which does not produce the concrete
finite schemes in question (those come from FMM's own recursive concatenation
and projection). This confabulation was strikingly persistent: it recurred
across sessions and survived explicit standing instructions recorded in the
agent's persistent project memory that warn against exactly it, requiring the
human to steer the agent off it hard and often. That written,
recalled-in-context rules did not reliably suppress a favoured hallucination is
itself worth recording. The consequence is structural: \emph{every}
agent-produced claim in this work was gated behind an independent check --- a
verifier, an exhaustive recomputation, or an adversarial second pass --- and the
project's emphasis on cheap, automatic verification
(Section~\ref{sec:nonoverlap}) is in part a direct response. Hallucination was
the norm, not the exception; what made the collaboration productive was not
trusting the agent but \emph{cheaply catching} it.

\paragraph{A cross-model data point.} The Hopcroft--Kerr
\(\nmpshape{2}{p}{n}\) constructive problem (Section~\ref{sec:hk71}) doubles
as a controlled comparison across model generations: three distinct agents
attacked the \emph{same} problem with the same repository context. OpenAI's
ChatGPT~5.5 (theory-first, Mathematica) and Anthropic's Claude Opus~4.8
(implementation-first) both stalled --- between them, partial emitters, a
correct-but-narrow impossibility theorem, and no formula-attaining
rectangular scheme; for months the HK schemes stood in the catalog as
\emph{imported but not re-derived}. Claude Fable~5 then broke the problem in
a few hours of session time, overturning the operative impossibility over
the true reusable set and emitting the machine-verified schemes ---
including \(\nmpshape{2}{10}{15}{=}233\), below every published catalog. The
human contribution remained the direction (``crack HK71 constructively''),
the paper PDF, and the acceptance tests; the chain of mathematical moves was
the agent's --- a single data point, not a benchmark, but the clearest
capability step we have observed between model generations on this project.

\paragraph{Tooling.} The bulk of the work was carried out with Anthropic's
Claude Code on Claude Opus~4.7 and~4.8; the most recent work (including parts
of this section) runs on Claude Fable~5. The hallucination rate noted above is
for the Opus~4.x models specifically, and is a moving target.

The percentages are self-assessments, not measurements --- an upper bound on
the author's share of the keystrokes, not a proven division of credit. The
split they sketch is the useful data point: the agent is strongest at turning
a well-posed direction into correct theory, code, and catalog-scale breadth;
the scarce, human-supplied ingredient is the choice of question. Human chooses
the lens, agent builds the optics.

\appendix

\section{Worked product tables: Strassen and Strassen--Winograd}
\label{app:worked-products}

The seven products of the Strassen and Winograd rank-7 \(\nmpshape{2}{2}{2}\)
templates over a heterogeneous split
\(\nmpshape{M_1{+}M_2}{N_1{+}N_2}{P_1{+}P_2}\), with block ``11'' the larger
part on every axis (real data top-left, any padding at the bottom/right). The
four \(A\)-blocks have shapes \(A_{11}\!:\!M_1\times N_1\),
\(A_{12}\!:\!M_1\times N_2\), \(A_{21}\!:\!M_2\times N_1\),
\(A_{22}\!:\!M_2\times N_2\), and likewise for \(B\) and \(C\); each product
is a (sum of \(A\)-blocks)\(\,\cdot\,\)(sum of \(B\)-blocks), realised at its
genuine nonzero overlap under the zero-sparing rule of
Section~\ref{ssec:decomposition} (``blk'' = single block, ``sum'' = two-block
sum).

\begin{center}\small
\begin{tabular}{@{}llll@{}}
\toprule
\multicolumn{2}{@{}l}{\textbf{Strassen}} & operands & realised size \\
\midrule
\(M_1\) & \((A_{11}{+}A_{22})(B_{11}{+}B_{22})\) & sum\(\cdot\)sum & \(\nmpshape{M_1}{N_1}{P_1}\) \\
\(M_2\) & \((A_{21}{+}A_{22})\,B_{11}\)          & sum\(\cdot\)blk & \(\nmpshape{M_2}{N_1}{P_1}\) \\
\(M_3\) & \(A_{11}(B_{12}{-}B_{22})\)            & blk\(\cdot\)sum & \(\nmpshape{M_1}{N_1}{P_2}\) \\
\(M_4\) & \(A_{22}(B_{21}{-}B_{11})\)            & blk\(\cdot\)sum & \(\nmpshape{M_2}{N_2}{P_1}\) \\
\(M_5\) & \((A_{11}{+}A_{12})\,B_{22}\)          & sum\(\cdot\)blk & \(\nmpshape{M_1}{N_2}{P_2}\) \\
\(M_6\) & \((A_{21}{-}A_{11})(B_{11}{+}B_{12})\) & sum\(\cdot\)sum & \(\nmpshape{M_1}{N_1}{P_1}\) \\
\(M_7\) & \((A_{12}{-}A_{22})(B_{21}{+}B_{22})\) & sum\(\cdot\)sum & \(\nmpshape{M_1}{N_2}{P_1}\) \\
\bottomrule
\end{tabular}
\end{center}

\begin{center}\small
\begin{tabular}{@{}llll@{}}
\toprule
\multicolumn{2}{@{}l}{\textbf{Winograd}} & operands & realised size \\
\midrule
\(P_1\) & \(A_{11}\,B_{11}\)                     & blk\(\cdot\)blk & \(\nmpshape{M_1}{N_1}{P_1}\) \\
\(P_2\) & \(A_{12}\,B_{21}\)                     & blk\(\cdot\)blk & \(\nmpshape{M_1}{N_2}{P_1}\) \\
\(P_3\) & \((A_{21}{+}A_{22})(B_{12}{-}B_{11})\) & sum\(\cdot\)sum & \(\nmpshape{M_2}{N_1}{P_1}\) \\
\(P_4\) & \(S_2\,T_2\)                           & sum\(\cdot\)sum & \(\nmpshape{M_1}{N_1}{P_1}\) \\
\(P_5\) & \(S_3\,T_3\)                           & sum\(\cdot\)sum & \(\nmpshape{M_1}{N_1}{P_2}\) \\
\(P_6\) & \(S_4\,B_{22}\)                        & sum\(\cdot\)blk & \(\nmpshape{M_1}{N_2}{P_2}\) \\
\(P_7\) & \(A_{22}\,T_4\)                        & blk\(\cdot\)sum & \(\nmpshape{M_2}{N_2}{P_1}\) \\
\bottomrule
\end{tabular}
\end{center}
\noindent with \(S_2{=}A_{21}{+}A_{22}{-}A_{11}\),
\(S_3{=}A_{11}{-}A_{21}\),
\(S_4{=}A_{11}{+}A_{12}{-}A_{21}{-}A_{22}\),
\(T_2{=}B_{11}{-}B_{12}{+}B_{22}\), \(T_3{=}B_{22}{-}B_{12}\),
\(T_4{=}B_{11}{-}B_{12}{-}B_{21}{+}B_{22}\).

\section{Implementation notes}
\label{app:implementation}

Concept-to-code map and engineering notes for the search engine of
Section~\ref{sec:search}.

\paragraph{Concept-to-class map.} The main concepts of
Section~\ref{sec:search} correspond to the following classes and
flags in the reference implementation:

\begin{center}
\begin{tabular}{ll}
\toprule
concept & class / flag \\
\midrule
frontier driver (impact-DAG priority queue) & \code{FrontierClosure} \\
recombination materialiser & \code{Recombination.constructWithAllocation} \\
composed sub-product lookup (overlay + disk) & \code{AlgorithmLookup}, \code{FieldAwareLookup} \\
allocation branch-and-bound & \code{AllocationOptimizer} \\
search budget & \code{SearchBudget}: \code{EXACT}, \code{upTo}, \code{maxNodes} \\
allocation filters & \code{maxImbalance}, \code{maxCombinations} \\
axis-flip orbit mask evaluation & \code{AnalyticalMaskSearch} \\
seed ranking by \(\omega\) & \code{docs/bases-by-omega.md} \\
\bottomrule
\end{tabular}
\end{center}

\paragraph{Pool configuration.} The default pool
(\code{PoolConfig.simple}) carries a small number of high-value outer
schemes --- Strassen, Winograd, Smirnov 3-row templates, and a few
imported Pan / AT / FMM bases. An \code{extendedPool} mode adds every
\code{Leaf}-tagged catalog entry as an outer template; that pool is
large enough that the \code{maxCombinations}-capped allocation
enumerator becomes load-bearing.

\paragraph{Budget flags.} \code{AllocationOptimizer} runs under a
\code{SearchBudget}: \code{EXACT} (no cap), \code{upTo} a known upper
bound (e.g.\ the Kronecker rank), or a \code{maxNodes} anytime cap.
The result carries an \code{exhaustive} flag --- propagated to the
optimality tier of Section~\ref{sec:intro} --- and reports the
\code{nodes} visited against the \code{fullSpace} of allocations.

\paragraph{Seed ranking.} \code{docs/bases-by-omega.md} is an
automatically-regenerated table ranking every pool-eligible base by
its raw \(\omega\), its \(\omega\) after imbalance-\(\le 3\)
restriction, and its \(\omega\) after axis-flip-canonical and
S\(_3\)-canonical orbit reductions, evaluated at the reference target
\(\nmpshape{17}{17}{17}\).

\paragraph{Caching.} Caching exists at two layers. The first is
per-scheme \emph{symbolic templates}: for each outer scheme we extract
once a tuple of six block-support bitmasks per product (\(U\)-row,
\(U\)-col, \(V\)-row, \(V\)-col, \(W\)-row, \(W\)-col), and per-call
sub-shape extraction reduces from \(O(r \cdot \text{dim}^2)\) to
\(O(r \cdot \text{dim})\). The second is a per-(scheme, allocation,
peel) cache of the resulting sub-shape multiset; this is a
hashing-overhead trade and is currently capped at 200k entries per
scheme. A symbolic-template-at-allocation-pattern layer would
generalise the second cache; the current implementation captures the
common cases without paying the templating cost.

\paragraph{Progress.} A derivation closure over the entire catalog up to
dim 32 takes hours of wall-clock; long runs emit periodic progress
lines and ETAs at three nested levels (per-round, per-shape, and
intra-search).

\bibliographystyle{plain}
\bibliography{refs}

\end{document}